\DeclareSIUnit{\nothing}{\relax}
\newtheorem{theorem}{Theorem}[section]
\newtheorem{proposition}[theorem]{Proposition}
\newtheorem{lemma}[theorem]{Lemma}
\theoremstyle{definition}
\newtheorem{definition}[theorem]{Definition}
\theoremstyle{remark}
\newtheorem{example}[theorem]{Example}
\newtheorem{remark}[theorem]{Remark}
\newcommand{\Replicas}{\mathfrak{R}}
\newcommand{\Clusters}{\mathfrak{S}}
\newcommand{\Cluster}{\mathcal{C}}
\newcommand{\Clients}[1]{\operatorname{clients}(#1)}
\newcommand{\Replica}[1][r]{\MakeUppercase{#1}}
\newcommand{\ID}[1]{\operatorname{id}(#1)}
\newcommand{\PrimaryC}[1]{\mathsf{P}_{#1}}
\newcommand{\Client}{c}
\newcommand{\Faulty}[1]{\mathsf{f}(#1)}
\newcommand{\NonFaulty}[1]{\mathsf{nf}(#1)}
\newcommand{\n}{\mathbf{n}}
\newcommand{\f}{\mathbf{f}}
\newcommand{\z}{\mathbf{z}}
\newcommand{\rn}{\rho}
\newcommand{\Transaction}{T}
\newcommand{\Message}[2]{\textsc{#1}(#2)}
\newcommand{\SignMessage}[2]{\langle#2\rangle_{#1}}
\newcommand{\Cert}[2]{[#2]_{#1}}
\newcommand{\Name}[1]{\textnormal{\textsc{#1}}}
\newcommand{\GeoBFT}{\Name{GeoBFT}}
\newcommand{\BFT}{\Name{Bft}}
\newcommand{\PBFT}{\Name{Pbft}}
\newcommand{\ZZ}{\Name{Zyzzyva}}
\newcommand{\HS}{\Name{HotStuff}}
\newcommand{\STW}{\Name{Steward}}
\newcommand{\ResilientDB}{\Name{Resilient\-DB}}
\newcommand{\PoW}{\Name{PoW}}
\newcommand{\abs}[1]{\lvert #1 \rvert}
\newcommand{\BigO}[1]{\mathcal{O}(#1)}
\newcommand{\dsfloorfrac}[2]{\lfloor #1 / #2 \rfloor}
\newcommand{\intersect}{\cap}
\newcommand{\difference}{\setminus}
\tikzset{
    >=Stealth,
    dot/.style={circle,scale=0.35,draw=black,fill=black},
    label/.append style={font=\strut\scriptsize},
    geobftplot/.append style={baseline,scale=0.5425}
}
\pgfplotsset{
    tick label style={font=\large},
    label style={font=\large},
    legend style={font=\large},
    every axis/.append style={
        ylabel near ticks,
        mark size=1.5pt,
        cycle list name=geobft,
        ymin=0,
        enlargelimits,
        font=\large
    }
}
\newenvironment{myprotocol}{
    \hrule
    \smallskip
    \scriptsize
    \algsetup{linenosize=\tiny}
    \begin{algorithmic}[1]
        \newcommand{\SPACE}{\item[]}
        \newcommand{\GETS}{:=}
        \newcommand{\TITLE}[2]{\item[] \textbf{\underline{##1}} (##2) \textbf{:}\\[0.5pt]}
        \makeatletter
            \newcommand{\EVENT}[1]{\STATE \textbf{event} ##1 \textbf{do}\begin{ALC@g}}
            \newcommand{\ENDEVENT}{\end{ALC@g}}
        \makeatother
}{
    \end{algorithmic}
    \smallskip
    \hrule
}
\begin{document}

\title{ResilientDB: Global Scale Resilient Blockchain Fabric}
\numberofauthors{1}
\author{
\alignauthor
    Suyash Gupta\thanks{Both authors have equally contributed to this work.}  \qquad 
    Sajjad Rahnama\footnotemark[1] \qquad
    Jelle Hellings \qquad
    Mohammad Sadoghi \\[4pt]	
        \affaddr{Exploratory Systems Lab}\\
	\affaddr{ Department of Computer Science}\\
	\affaddr{University of California, Davis}\\
        \email{\textbraceleft sgupta,srahnama,jhellings,msadoghi\textbraceright @ucdavis.edu}\\
}

\vldbTitle{ResilientDB: Global Scale Resilient Blockchain Fabric}
\vldbAuthors{Suyash Gupta, Sajjad Rahnama, Jelle Hellings, and Mohammad Sadoghi}
\vldbDOI{https://doi.org/10.14778/3380750.3380757}
\vldbVolume{13}
\vldbNumber{6}
\vldbYear{2020}

\maketitle

\begin{abstract}
Recent developments in blockchain technology have inspired innovative new designs in resilient distributed and database systems. At their core, these blockchain applications typically use Byzantine fault-tolerant consensus protocols to maintain a common state across all replicas, even if some replicas are faulty or malicious. Unfortunately, existing consensus protocols are not designed to deal with \emph{geo-scale deployments} in which many replicas spread across a geographically large area participate in consensus.
 
To address this, we present the Geo-Scale Byzantine Fault-Tolerant consensus protocol (\GeoBFT). \GeoBFT{} is designed for excellent scalability by using a topological-aware grouping of replicas in local clusters, by introducing parallelization of consensus at the local level, and by minimizing communication between clusters. To validate our vision of high-performance geo-scale resilient distributed systems, we implement \GeoBFT{} in our efficient \ResilientDB{} permissioned blockchain fabric. We show that \GeoBFT{} is not only sound and  provides great scalability, but also outperforms state-of-the-art consensus protocols by a factor of six in geo-scale deployments.

\end{abstract}

\section{Introduction}\label{sec:intro}
Recent interest in \emph{blockchain technology} has renewed development of distributed \emph{Byzantine fault-tolerant} (\BFT{}) systems that can deal with failures and malicious attacks of some 
participants~\cite{blockdev,blockeu,impactblock,christies,ibmgdpr,disc_mbft,encybd,blockhealthover,bitcoin,blockplane,hypereal,ether}.
Although these systems are safe, they attain low throughput, especially  when the nodes are spread across a wide-area network 
(or \emph{geographically large distances}).
We believe this contradicts the central promises of blockchain technology:
\emph{decentralization} and \emph{democracy}, 
in which arbitrary replicas at arbitrary distances can participate~\cite{blockbench,tutorial-middleware,encybd}.

At the core of any blockchain system is a \BFT{} consensus protocol that helps participating replicas to achieve resilience. Existing blockchain database systems and data-processing frameworks typically use \emph{permissioned blockchain designs} that rely on traditional \BFT{} consensus~\cite{poe,multibft-system,blockchain_dist,distbook,pdbook,distalgo}. These permissioned blockchains employ a \emph{fully-replicated design} in which all replicas are known and each replica holds a full copy of the data (the blockchain).

\subsection{Challenges for Geo-scale Blockchains}\label{ss:challenges}
To enable  geo-scale deployment of a permissioned blockchain system, we believe that the underlying consensus protocol must distinguish 
between \emph{local} and \emph{global} communication. 
This belief is easily supported in practice. 
For example, in Table~\ref{tbl:geoscale_cost} we illustrate the ping round-trip time and bandwidth measurements. 
These measurements show that global message latencies are at least $33$--$270$ times higher than local latencies, 
while the maximum throughput is $10$--$151$ times lower, both implying that communication between regions is 
\emph{several orders of magnitude} more costly than communication within regions. 
Hence, a blockchain system needs to recognize and minimize global communication if it is to attain high performance in a geo-scale deployment.

\begin{table}[t!]
    \newcommand{\LocalCell}[1]{\cellcolor{green!35}#1}
    \newcommand{\GlobalCell}[1]{\cellcolor{red!20}#1}
    \newcommand{\RCell}[0]{\cellcolor{black!10}}
    \caption{Real-world inter- and intra-cluster communication costs in terms of the ping round-trip times (which determines \emph{latency}) and bandwidth (which determines \emph{throughput}). These measurements are taken in Google Cloud using clusters of \texttt{n1} machines (replicas) that are deployed in six different regions.}\label{tbl:geoscale_cost}
    \smallskip
    \centering
    \setlength\tabcolsep{1pt}
    \scalebox{0.78}{
        \begin{tabular}{|l|cccccc|cccccc|}
        \hline
        \multicolumn{1}{|c|}{}&\multicolumn{6}{c|}{\emph{Ping round-trip times (\si{\milli\second})}}&\multicolumn{6}{c|}{\emph{Bandwidth (\si[per-mode=symbol]{\mega\bit\per\second})}}\\
        \hline
        \multicolumn{1}{|c|}{}& \emph{O}& \emph{I}& \emph{M}& \emph{B}& \emph{T}& \emph{S}& \emph{O}& \emph{I}& \emph{M}& \emph{B}& \emph{T}& \emph{S}\\
        \hline
        \hline
        Oregon (\emph{O})&\LocalCell{$\leq1$}&\GlobalCell{$38$}&\GlobalCell{$65$}&\GlobalCell{$136$}&\GlobalCell{$118$}&\GlobalCell{$161$}&\LocalCell{$7998$}&\GlobalCell{$669$}&\GlobalCell{$371$}&\GlobalCell{$194$}&\GlobalCell{$188$}&\GlobalCell{$136$}\\
        Iowa (\emph{I})&\RCell{}&\LocalCell{$\leq1$}&\GlobalCell{$33$}&\GlobalCell{$98$}&\GlobalCell{$153$}&\GlobalCell{$172$}            &\RCell{}&\LocalCell{$10004$}&\GlobalCell{$752$}&\GlobalCell{$243$}&\GlobalCell{$144$}&\GlobalCell{$120$}\\         
        Montreal (\emph{M})&\RCell{}&\RCell{}&\LocalCell{$\leq1$}&\GlobalCell{$82$}&\GlobalCell{$186$}&\GlobalCell{$202$}                 &\RCell{}&\RCell{}&\LocalCell{$7977$}&\GlobalCell{$283$}&\GlobalCell{$111$}&\GlobalCell{$102$}\\
        Belgium (\emph{B})&\RCell{}&\RCell{}&\RCell{}&\LocalCell{$\leq1$}&\GlobalCell{$252$}&\GlobalCell{$270$}                           &\RCell{}&\RCell{}&\RCell{}&\LocalCell{$9728$}&\GlobalCell{$79$}&\GlobalCell{$66$}\\
        Taiwan (\emph{T})&\RCell{}&\RCell{}&\RCell{}&\RCell{}&\LocalCell{$\leq1$}&\GlobalCell{$137$}                                      &\RCell{}&\RCell{}&\RCell{}&\RCell{}&\LocalCell{$7998$}&\GlobalCell{$160$}\\
        Sydney (\emph{S})&\RCell{}&\RCell{}&\RCell{}&\RCell{}&\RCell{}&\LocalCell{$\leq1$}                                                &\RCell{}&\RCell{}&\RCell{}&\RCell{}&\RCell{}&\LocalCell{$7977$}\\
        \hline
        \end{tabular}
    }
\end{table}

In the design of geo-scale aware consensus protocols, this translates to two important properties. 
First, a geo-scale aware consensus protocol needs to be \emph{aware of the network topology}. 
This can be achieved by clustering replicas in a region together and favoring communication within 
such clusters over global inter-cluster communication. Second, a geo-scale aware consensus protocol needs to be \emph{decentralized}: no single replica or cluster should be responsible for coordinating all consensus decisions, as such a centralized design limits the throughput to the outgoing global bandwidth and latency of this single replica or cluster.

Existing state-of-the-art consensus protocols do not share these two properties. The influential Practical Byzantine Fault Tolerance consensus protocol (\PBFT{})~\cite{pbft,pbftj} is centralized, as it relies on a single primary replica to coordinate all consensus decisions, 
and requires a vast amount of global communication (between all pairs of replicas). 
Protocols such as \ZZ{} improve on this by reducing communication costs in the optimal case~\cite{bft700,zyzzyva,zyzzyvaj}. 
However, these protocols still have a highly centralized design and do not favor local communication. 
Furthermore, \ZZ{} provides high throughput only if there are no failures and requires reliable clients~\cite{zzuns,aadvark}. 
The recently introduced \HS{} improves on \PBFT{} by simplifying the recovery process on primary failure~\cite{hotstuff}. 
This allows \HS{} to efficiently switch primaries for every consensus decision, providing the potential of decentralization. 
However, the design of \HS{} does not favor local communication, and the usage of threshold signatures strongly centralizes 
all communication for a single consensus decision to the primary of that round. 
Another recent protocol \textsc{PoE} provides better throughput than both 
\PBFT{} and \ZZ{} in the presence of failures, this without employing threshold signatures~\cite{poe}.
Unfortunately, also \textsc{PoE} has a centralized design that depends on a single primary.
Finally, the geo-aware consensus protocol \STW{} promises to do better~\cite{steward}, as it recognizes local clusters and tries to minimize inter-cluster communication. However, due to its centralized design and reliance on cryptographic primitives with high computational costs, \STW{} is unable to benefit from its topological knowledge of the network.

\subsection{GeoBFT: Towards Geo-scale Consensus}
In this work, we improve on the state-of-the-art by introducing \GeoBFT{}, 
a topology-aware and decentralized consensus protocol. 
In \GeoBFT{}, we group replicas in a region into clusters, and we let each cluster make consensus decisions independently. 
These consensus decisions are then shared via an optimistic low-cost communication protocol with the other clusters, 
in this way assuring that all replicas in all clusters are able to learn the same sequence of consensus decisions: if we have two clusters $\Cluster_1$ and $\Cluster_2$ with $\n$ replicas each, then our optimistic communication protocol requires only $\lceil \n/3 \rceil$ messages to be sent from $\Cluster_1$ to $\Cluster_2$ when $\Cluster_1$ needs to share local consensus decisions with $\Cluster_2$.
In specific, we make the following contributions:
\begin{enumerate}
    \item We introduce the \GeoBFT{} consensus protocol, a novel consensus protocol 
that performs a topological-aware grouping of replicas into local clusters to minimize global communication. \GeoBFT{} also decentralizes consensus by allowing each cluster to make consensus decisions independently.
    \item To reduce global communication, we introduce a novel global sharing protocol that 
{\em optimistically} performs minimal inter-cluster communication, while still enabling reliable detection of communication failure.
    \item The optimistic global sharing protocol is supported by a novel \emph{remote view-change protocol} that deals with any malicious behavior and any failures.
    \item We prove that \GeoBFT{} guarantees \emph{safety}: it achieves a unique sequence of consensus decisions among all replicas and ensures that clients can reliably detect when their transactions are executed, this independent of any malicious behavior by any replicas.
    \item We show that \GeoBFT{} guarantees \emph{liveness}: whenever the network provides reliable communication, \GeoBFT{} continues successful operation, this independent of any malicious behavior by any replicas.
    \item To validate our vision of using \GeoBFT{} in geo-scale settings, we present  our \ResilientDB{} fabric~\cite{resilientdb} and implement \GeoBFT{} in this fabric.\footnote{We have open-sourced our \ResilientDB{} fabric at \url{https://resilientdb.com/}.}
    \item We also implemented other state-of-the-art \BFT{} protocols in \ResilientDB{} (\ZZ{}, \PBFT{}, \HS{}, and \STW{}), and evaluate \GeoBFT{} against these \BFT{} protocols using the YCSB benchmark~\cite{ycsb}. We show that \emph{\GeoBFT{} achieves up-to-six times more throughput} than existing \BFT{} protocols.
\end{enumerate}

In Table~\ref{tbl:compare}, we provide a summary of the complexity of the normal-case operations of \GeoBFT{} and compare this to the complexity of other popular \BFT{} protocols.

\begin{table}[t!]
    \newcommand{\Bad}{\cellcolor{red!20}}
    \newcommand{\Good}{\cellcolor{green!35}}
    \caption{The normal-case metrics of \BFT{} consensus protocols in a system with $\z$ clusters, each with $\n$ replicas of which at most $\f$, $\n > 3\f$, are Byzantine. \GeoBFT{} provides the lowest global communication cost per consensus decision (transaction) and operates decentralized.}\label{tbl:compare}
    \smallskip
    \centering
    \scalebox{0.78}{
        \begin{tabular}{|l||c|c|c|c|}
        \hline
        Protocol&Decisions&\multicolumn{2}{c|}{Communication}&Centralized\\
        \hline
                &        & \emph{(Local)} & \emph{(Global)} &\\
        \hline
        \hline
        \Good{}\GeoBFT{} (our paper)  &\Good{}$\z$&\Good{} $\BigO{2\z\n^2}$&\Good{} $\BigO{\f\z^2}$&\Good{}No\\
        \Good{}\qquad\rotatebox[origin=c]{180}{$\Lsh$} \emph{single decision}&\Good{}$1$&\Good{} $\BigO{4\n^2}$&\Good{}$\BigO{\f\z}$&\Good{}No\\
        \STW{}      &$1$     &$\BigO{2\z\n^2}$& \Bad{}$\BigO{\z^2}$&\Bad{}Yes\\
        \hline
        \ZZ{}       &$1$ & \multicolumn{2}{c|}{$\BigO{\z\n}$}      &\Bad{}Yes\\
        \PBFT{}     &$1$ & \multicolumn{2}{c|}{\Bad{}$\BigO{2(\z\n)^2}$} &\Bad{}Yes\\
	\textsc{PoE}&$1$ & \multicolumn{2}{c|}{\Bad{}$\BigO{(\z\n)^2}$} &\Bad{}Yes\\
        \HS{}       &$1$ & \multicolumn{2}{c|}{\Bad{}$\BigO{8(\z\n)}$} &Partly\\
        \hline
        \end{tabular}
    }
\end{table}

\section[GeoBFT: Geo-Scale Consensus]{G\MakeLowercase{eo}BFT: Geo-Scale Consensus}\label{sec:geobft}

We now present our Geo-Scale Byzantine Fault-Tolerant consensus protocol (\GeoBFT{}) 
that uses topological information to group all replicas in a single region into a single cluster. 
Likewise, \GeoBFT{} assigns each client to a single cluster. 
This clustering helps in attaining high throughput and scalability in geo-scale deployments. 
\GeoBFT{} operates in rounds, and in each round, every cluster will be able to propose a single client request for execution. 
Next, we sketch the high-level working of such a round of \GeoBFT{}. 
Each round consists of the three steps sketched in Figure~\ref{fig:geobft-overview}: \emph{local replication}, \emph{global sharing}, and \emph{ordering and execution}, which 
we further detail next.

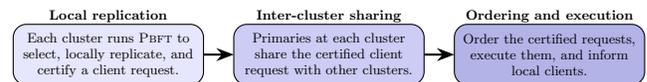
\begin{figure}[t]
    \centering
        \begin{tikzpicture}[scale=0.575,transform shape]
        \draw[draw=black,rounded corners=5pt,fill=black!10!blue!10] (1.2, 0.3) rectangle (5.6, 1.7);
        \draw[draw=black,rounded corners=5pt,fill=black!20!blue!20] (6.3, 0.3) rectangle (10.7, 1.7);
        \draw[draw=black,rounded corners=5pt,fill=black!30!blue!30] (11.4, 0.3) rectangle (15.8, 1.7);
        
        \node[align=center] at (3.4, 1) {Each cluster runs \PBFT{} to\\select, locally replicate, and\\certify a client request.};
        \node[align=center] at (8.5, 1) {Primaries at each cluster\\share the certified client\\request with other clusters.};
        \node[align=center] at (13.6, 1) {Order the certified requests,\\ execute them, and inform\\local clients.};

        \node[above=-2pt,font=\bfseries,align=center] at (3.4, 1.7) {Local replication};
        \node[above=-2pt,font=\bfseries,align=center] at (8.5, 1.7) {Inter-cluster sharing};
        \node[above=-2pt,font=\bfseries,align=center] at (13.6, 1.7) {Ordering and execution};
        
        \path[thick] (5.6,1) edge[->] (6.3,1) (10.7, 1) edge[->] (11.4,1);
    \end{tikzpicture}%
    \vspace{-4mm}
    \caption{Steps in a round of the \GeoBFT{} protocol.}
    \label{fig:geobft-overview}
\end{figure}

\begin{enumerate}
\item At the start of each round, each cluster chooses a single transaction of a local client. 
Next, each cluster \emph{locally replicates} its chosen transaction in a Byzantine fault-tolerant manner using \PBFT{}. 
At the end of successful local replication, \PBFT{} guarantees that each non-faulty replica can prove successful local replication via a \emph{commit certificate}.
\item Next, each cluster shares the locally-replicated transaction along with its commit certificate with all other clusters.  
To minimize inter-cluster communication, we use a novel \emph{optimistic global sharing protocol}. 
Our optimistic global sharing protocol has a global phase in which clusters exchange locally-replicated transactions, followed by a local phase in which clusters distribute any received transactions locally among all local replicas. To deal with failures, the global sharing protocol utilizes a novel remote view-change protocol.
\item Finally, after receiving all transactions that are locally-replicated in other clusters, 
each replica in each cluster can deterministically \emph{order} all these transactions and proceed with their \emph{execution}.  
After execution, the replicas in each cluster inform only local clients of the outcome of the execution of their transactions 
(e.g., confirm execution or return any execution results).
\end{enumerate}

In Figure~\ref{fig:geobft_sketch}, we sketch a single round of \GeoBFT{} in a setting of two clusters with four replicas each. 

\begin{figure}[t!]
    \centering
    \begin{tikzpicture}[yscale=0.4,xscale=1.05]
        \draw[draw=purple!20,fill=purple!20] (4, 0) rectangle (5, 10.5);
        
        \draw[thick,draw=black!75] (0.75, 4.5) edge[green!50!black!90] ++(6.75, 0)
                                   (0.75, 0) edge ++(6.75, 0)
                                   (0.75, 1) edge ++(6.75, 0)
                                   (0.75, 2) edge ++(6.75, 0)
                                   (0.75, 3) edge[blue!50!black!90] ++(6.75, 0) 

                                   (0.75, 10.5) edge[green!50!black!90] ++(6.75, 0)
                                   (0.75,  6) edge ++(6.75, 0)
                                   (0.75,  7) edge ++(6.75, 0)
                                   (0.75,  8) edge ++(6.75, 0)
                                   (0.75,  9) edge[blue!50!black!90] ++(6.75, 0);

       \draw[thin,draw=black!75] (1,   0) edge ++(0, 4.5)
                                  (2,   0) edge ++(0, 4.5)
                                  (4,   0) edge ++(0, 4.5)
                                  (5,   0) edge ++(0, 4.5)
                                  (6,   0) edge ++(0, 4.5)
                                  (6.5, 0) edge ++(0, 4.5)
                                  (7.5, 0) edge ++(0, 4.5)
                                  
                                  (1,   6) edge ++(0, 4.5)
                                  (2,   6) edge ++(0, 4.5)
                                  (4,   6) edge ++(0, 4.5)
                                  (5,   6) edge ++(0, 4.5)
                                  (6,   6) edge ++(0, 4.5)
                                  (6.5, 6) edge ++(0, 4.5)
                                  (7.5, 6) edge ++(0, 4.5);

        \node[left] at (0.8, 0) {$\Replica_{2,3}$};
        \node[left] at (0.8, 1) {$\Replica_{2,2}$};
        \node[left] at (0.8, 2) {$\Replica_{2,1}$};
        \node[left] at (0.8, 3) {$\PrimaryC{\Cluster_2}$};
        \node[left] at (0.8, 4.5) {$\Client_2$};
        
        \node[left] at (0.8, 6) {$\Replica_{1,3}$};
        \node[left] at (0.8, 7) {$\Replica_{1,2}$};
        \node[left] at (0.8, 8) {$\Replica_{1,1}$};
        \node[left] at (0.8, 9) {$\PrimaryC{\Cluster_1}$};
        \node[left] at (0.8, 10.5) {$\Client_1$};

        \path[->] (1, 4.5) edge node[above=-6pt,xshift=4pt,label] {$\Transaction_2$} (2, 3)
                  (1, 10.5) edge node[above=-6pt,xshift=4pt,label] {$\Transaction_1$} (2, 9)

                  (2, 3) edge (2.5, 2) edge (2.5, 1) edge (2.5, 0)
                  (2, 9) edge (2.5, 8) edge (2.5, 7) edge (2.5, 6);

        \path[<-] (4, 3) edge (3.5, 2) edge (3.5, 1) edge (3.5, 0)
                  (4, 2) edge (3.5, 3) edge (3.5, 1) edge (3.5, 0)
                  (4, 1) edge (3.5, 3) edge (3.5, 2) edge (3.5, 0)
                  (4, 0) edge (3.5, 3) edge (3.5, 2) edge (3.5, 1)
                  (4, 9) edge (3.5, 8) edge (3.5, 7) edge (3.5, 6)
                  (4, 8) edge (3.5, 9) edge (3.5, 7) edge (3.5, 6)
                  (4, 7) edge (3.5, 9) edge (3.5, 8) edge (3.5, 6)
                  (4, 6) edge (3.5, 9) edge (3.5, 8) edge (3.5, 7);
                  
        \path[->] (4, 9) edge (5, 3) edge (5, 2)
                  (4, 3) edge (5, 9) edge (5, 8)
                  
                  (5, 9) edge (6, 8) edge (6, 7) edge (6, 6)
                  (5, 8) edge (6, 9) edge (6, 7) edge (6, 6)

                  (5, 3) edge (6, 2) edge (6, 1) edge (6, 0)
                  (5, 2) edge (6, 3) edge (6, 1) edge (6, 0);

        \path[<-] (7.5, 10.5) edge (6.5, 9) edge (6.5, 8) edge (6.5, 7) edge (6.5, 6)
                  (7.5, 4.5) edge (6.5, 3) edge (6.5, 2) edge (6.5, 1) edge (6.5, 0);

        \draw[draw=orange!60,fill=orange!40,rounded corners] (2.2, 3.2) rectangle (3.8, -0.2)
                                                             (2.2, 9.2) rectangle (3.8, 5.8);

        \node[align=center,label] at (3, 1.5) {Local \PBFT{}\\Consensus\\on $\Transaction_2$};
        \node[align=center,label] at (3, 7.5) {Local \PBFT{}\\Consensus\\on $\Transaction_1$};
        
        \draw[draw=black!20,fill=black!10,rounded corners]  (6.05, -0.6) rectangle (6.45, 3.6) (6.05, 5.4) rectangle (6.45, 9.6);
        \node[align=center,label,rotate=270] at (6.25, 1.5) {Execute $\Transaction_1\Transaction_2$};
        \node[align=center,label,rotate=270] at (6.25, 7.5) {Execute $\Transaction_1\Transaction_2$};

        \node[align=center,below,label] at (1.5, 0) {Local\\Request};
        \node[align=center,below,label] at (  3, 0) {Local\\Replication};
        \node[align=center,below,label] at (4.5, 0) {Global\\Sharing};
        \node[align=center,below,label] at (5.5, 0) {Local\\Sharing};
        \node[align=center,below,label] at (  7, 0) {Local\\Inform};
        
        \draw[decoration={brace},decorate] (0.15, -0.4) -- (0.15, 3.4);
        \node[left=2pt] at (0.15, 1.5) {$\Cluster_2$};
        
        \draw[decoration={brace},decorate] (0.15, 5.6) -- (0.15, 9.4);
        \node[left=2pt] at (0.15, 7.5) {$\Cluster_1$};
    \end{tikzpicture}%
    \vspace{-4mm}
    \caption{Representation of the normal-case algorithm of \GeoBFT{} running on two clusters. Clients $\Client_i$, $i \in \{1,2\}$, request transactions $\Transaction_i$ from their local cluster $\Cluster_i$. The primary $\PrimaryC{\Cluster_i} \in \Cluster_i$ replicates this transaction to all local replicas using \PBFT{}. At the end of local replication, the primary can produce a cluster certificate for $\Transaction_i$. These are shared with other clusters via inter-cluster communication, after which all replicas in all clusters can execute $\Transaction_i$ and $\Cluster_i$ can inform $\Client_i$.}\label{fig:geobft_sketch}
\end{figure}
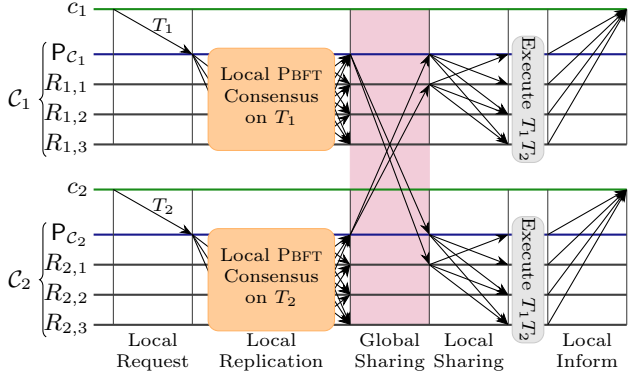

\subsection{Preliminaries}\label{ss:prelim}
To present \GeoBFT{} in detail, we first introduce the system model we use and the relevant notations.

Let $\Replicas$ be a set of replicas. We model a topological-aware \emph{system} as a partitioning of $\Replicas$ 
into a set of clusters $\Clusters = \{ \Cluster_1, \dots, \Cluster_{\z} \}$, 
in which each cluster $\Cluster_i$, $1 \leq i \leq \z$, is a set of $\abs{\Cluster_i} = \n$ replicas of which 
at most $\f$ are \emph{faulty} and can behave in \emph{Byzantine}, possibly coordinated and malicious, manners. 
We assume that in each cluster $\n > 3\f$.

\begin{remark}\label{rem:model}
We assumed $\z$ clusters with $\n > 3\f$ replicas each. Hence, $\n = 3\f + j$ for some $j \geq 1$. We use the same failure model as \STW{}~\cite{steward}, but our failure model differs from the more-general failure model utilized by \PBFT{}, \ZZ{}, and \HS{}~\cite{bft700,pbft,pbftj,zyzzyva,zyzzyvaj,hotstuff}. 
These protocols can each tolerate the failure of up-to-$\dsfloorfrac{\z\n}{3} = \dsfloorfrac{(3\f\z + \z j)}{3} = \f\z + \dsfloorfrac{\z j}{3}$ replicas, 
even if more than $\f$ of these failures happen in a single region; 
whereas \GeoBFT{} and \STW{} can only tolerate $\f\z$ failures, of which at most $\f$ can happen in a single cluster. E.g., if $\n = 13$, $\f = 4$, and $\z = 7$, then \GeoBFT{} and \STW{} can tolerate $\f\z=28$ replica failures in total, whereas the other protocols can tolerate $30$ replica failures. 
The failure model we use enables the efficient geo-scale aware design of \GeoBFT{}, this without facing well-known communication bounds~\cite{cryptorounds,netbound,dolevstrong,byzgen,interbound}.
\end{remark}

We write $\Faulty{\Cluster_i}$ to denote the Byzantine replicas in cluster $\Cluster_i$ and 
$\NonFaulty{\Cluster_i} = \Cluster_i \difference \Faulty{\Cluster_i}$ to denote the non-faulty replicas in $\Cluster_i$. 
Each replica $\Replica \in \Cluster_i$ has a unique identifier $\ID{\Replica}$, $1 \leq \ID{\Replica} \leq \n$. We assume that non-faulty replicas behave in accordance to the protocol and are deterministic: on identical inputs, all non-faulty replicas must produce identical outputs. We do not make any assumptions on clients: all client can be malicious without affecting \GeoBFT{}.

Some messages in \GeoBFT{} are forwarded (for example, the client request and commit certificates during inter-cluster sharing). 
To ensure that malicious replicas do not tamper with messages while forwarding them, 
we sign these messages using digital signatures~\cite{cryptobook,hac}. 
We write $\SignMessage{u}{m}$ to denote a message signed by $u$. 
We assume that it is practically impossible to forge digital signatures. 
We also assume \emph{authenticated communication}: Byzantine replicas can impersonate each other, 
but no replica can impersonate another non-faulty replica. 
Hence, on receipt of a message $m$ from replica $\Replica \in \Cluster_i$, one can determine that $\Replica$ did send $m$ if $\Replica \notin \Faulty{\Cluster_i}$;  and one can only determine that $m$ was sent by a non-faulty replica if $\Replica \in \NonFaulty{\Cluster_i}$. In the permissioned setting, authenticated communication is a minimal requirement to deal with Byzantine behavior, as otherwise Byzantine replicas can impersonate all non-faulty replicas (which would lead to so-called Sybil attacks)~\cite{sybil}. For messages that are forwarded, authenticated communication is already provided via digital signatures. For all other messages, we use less-costly message authentication codes~\cite{cryptobook,hac}. Replicas will discard any messages that are not well-formed, have invalid message authentication codes (if applicable), or have invalid signatures (if applicable).

Next, we define the consensus provided by \GeoBFT{}.
\begin{definition}\label{def:consensus}
Let $\Clusters$ be a system over $\Replicas$. A single run of any \emph{consensus protocol} should satisfy the following two requirements: 
\begin{description}[nosep]
    \item[Termination] Each non-faulty replica in $\Replicas$ executes a transaction.
    \item[Non-divergence] All non-faulty replicas execute the same transaction.
\end{description}

Termination is typically referred to as \emph{liveness}, whereas non-divergence is typically referred to as \emph{safety}. 
A single round of our \GeoBFT{} consists of $\z$ consecutive runs of the \PBFT{} consensus protocol. Hence, in a single round of \GeoBFT{}, all non-faulty replicas execute the same sequence of $\z$ transactions.
\end{definition}

To provide \emph{safety}, we do not need any other assumptions on communication or on the behavior of clients. Due to well-known impossibility results for asynchronous consensus~\cite{cap12,capthm,flp,capproof}, we can only provide \emph{liveness} in periods of \emph{reliable bounded-delay communication} during which all messages sent by non-faulty replicas will arrive at their destination within some maximum delay.

\subsection{Local Replication}\label{ss:local_rep}

In the first step of \GeoBFT{}, the local replication step, 
each cluster will independently choose a client request to execute. 
Let $\Clusters$ be a system. 
Each round $\rn$ of \GeoBFT{} starts with each cluster $\Cluster \in \Clusters$ replicating a client 
request $\Transaction$ of client $\Client \in \Clients{\Cluster}$. 
To do so, \GeoBFT{} relies on \PBFT{}~\cite{pbft,pbftj},\footnote{Other consensus protocols such as \ZZ{}~\cite{bft700,zyzzyva,zyzzyvaj} and \HS{}~\cite{hotstuff} promise to improve on \PBFT{} by sharply reducing communication. In our setting, where local communication is abundant (see Table~\ref{tbl:geoscale_cost}), such improvements are unnecessary, and the costs of \ZZ{} (reliable clients) and \HS{} (high computational complexity) can be avoided.} 
a primary-backup protocol in which one replica acts as the \emph{primary}, while all the other replicas act as \emph{backups}. 
In \PBFT{}, the primary is responsible for coordinating the replication of client transactions. 
We write $\PrimaryC{\Cluster}$ to denote the replica in $\Cluster$ that is the current \emph{local primary} of cluster $\Cluster$. The normal-case of \PBFT{} operates in four steps which we sketch in Figure~\ref{fig:pbft}. Next, we detail these steps.

First, the primary $\PrimaryC{\Cluster}$ receives client requests of the form $\SignMessage{\Client}{\Transaction}$, 
transactions $\Transaction$ signed by a local client $\Client \in \Clients{\Cluster}$. 

Then, in round $\rn$, $\PrimaryC{\Cluster}$ chooses a request $\SignMessage{\Client}{\Transaction}$ and 
initiates the replication of this request by proposing it to all replicas via a \Name{preprepare} message.  
When a backup replica receives a \Name{preprepare} message from the primary, it agrees to participate in 
a two-phase Byzantine commit protocol. 
This commit protocol can succeed if at least $\n-2\f$ non-faulty replicas receive the same \Name{preprepare} message. 

In the first phase of the Byzantine commit protocol, each replica $\Replica$ responds to the \Name{preprepare} message $m$
by broadcasting a \Name{prepare} message in support of $m$. 
After broadcasting the \Name{prepare} message, $\Replica$ waits until it receives $\n-\f$ \Name{prepare} 
messages in support of $m$ (indicating that at least $\n-2\f$ non-faulty replicas support $m$). 

Finally, after receiving these messages, $\Replica$ enters the second phase of the Byzantine commit protocol and broadcasts a \Name{commit} message in support of $m$. 
Once a replica $\Replica$ receives $\n-\f$ \Name{commit} messages in support of $m$, 
it has the guarantee that eventually all replicas will commit to $\SignMessage{\Client}{\Transaction}$. 

This protocol exchanges sufficient information among all replicas to enable detection 
of malicious behavior of the primary and to recover from any such behavior. Moreover, on success, each non-faulty replica $\Replica \in \Cluster$ will be committed to the 
proposed request $\SignMessage{\Client}{\Transaction}$  and will be able to construct a \emph{commit certificate} $\Cert{\Replica}{\SignMessage{\Client}{\Transaction}, \rn}$ that proves this commitment.  In \GeoBFT{}, this commit certificate consists of the client request $\SignMessage{\Client}{\Transaction}$ and 
$\n - \f > 2\f$ identical \Name{commit} messages for $\SignMessage{\Client}{\Transaction}$ signed by distinct replicas. 
Optionally, \GeoBFT{} can use threshold signatures to represent these $\n - \f$ signatures 
via a single constant-sized threshold signature~\cite{rsasign}.

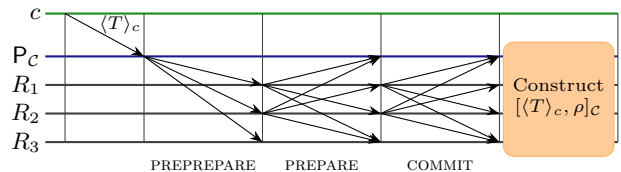
\begin{figure}[t!]
    \centering
    \begin{tikzpicture}[yscale=0.38,xscale=1.05]
        \draw[thick,draw=black!75] (0.75, 4.5) edge[green!50!black!90] ++(7.25, 0)
                                   (0.75,   0) edge ++(7.25, 0)
                                   (0.75,   1) edge ++(7.25, 0)
                                   (0.75,   2) edge ++(7.25, 0)
                                   (0.75,   3) edge[blue!50!black!90] ++(7.25, 0);

        \draw[thin,draw=black!75] (1,   0) edge ++(0, 4.5)
                                  (2, 0) edge ++(0, 4.5)
                                  (3.5,   0) edge ++(0, 4.5)
                                  (5, 0) edge ++(0, 4.5)
                                  (6.5,   0) edge ++(0, 4.5)
                                  (8,   0) edge ++(0, 4.5);

        \node[left] at (0.8, 0) {$\Replica_3$};
        \node[left] at (0.8, 1) {$\Replica_2$};
        \node[left] at (0.8, 2) {$\Replica_1$};
        \node[left] at (0.8, 3) {$\PrimaryC{\Cluster}$};
        \node[left] at (0.8, 4.5) {$\Client$};

        \path[->] (1, 4.5) edge node[above=-4pt,xshift=6pt,label] {$\SignMessage{\Client}{\Transaction}$} (2, 3)
                  (2, 3) edge (3.5, 2)
                         edge (3.5, 1)
                         edge (3.5, 0)
                           
                  (3.5, 2) edge (5, 0)
                           edge (5, 1)
                           edge (5, 3)

                  (3.5, 1) edge (5, 0)
                           edge (5, 2)
                           edge (5, 3)
                         
                  (5, 2) edge (6.5, 0)
                         edge (6.5, 1)
                         edge (6.5, 3)

                  (5, 1) edge (6.5, 0)
                         edge (6.5, 2)
                         edge (6.5, 3)
                           ;
                           
        \draw[draw=orange!60,fill=orange!40,rounded corners]  (6.55, -0.5) rectangle (7.95, 3.5);
        \node[align=center,label] at (7.25, 1.5) {Construct\\$\Cert{\Cluster}{\SignMessage{\Client}{\Transaction}, \rn}$};

        \node[below,label] at (2.75, 0) {\Name{preprepare}};
        \node[below,label] at (4.25, 0) {\Name{prepare}};
        \node[below,label] at (5.75, 0) {\Name{commit}};
    \end{tikzpicture}%
    \vspace{-4mm}
    \caption{The normal-case working of round $\rn$ of \PBFT{} within a cluster $\Cluster$: a client $\Client$ requests transaction $\Transaction$, the primary $\PrimaryC{\Cluster}$ proposes this request to all local replicas, which prepare and commit this proposal, and, finally, all replicas can construct a commit certificate.}\label{fig:pbft}
\end{figure}

In \GeoBFT{}, we use a \PBFT{} implementation that only uses digital signatures for client requests and \Name{commit} messages, as these are the only messages that need forwarding. In this configuration, \PBFT{} provides the following properties:

\begin{lemma}[Castro et al.~\cite{pbft,pbftj}]\label{lem:pbft}
Let $\Clusters$ be a system and let $\Cluster \in \Clusters$ be a cluster with $\n > 3\f$. We have the following:
\begin{description}[nosep]
    \item[Termination] If communication is reliable, has bounded delay, and a replica $\Replica \in \Cluster$ is able to construct a commit certificate $\Cert{\Replica}{\SignMessage{\Client}{\Transaction}, \rn}$, then all non-faulty replicas $\Replica' \in \NonFaulty{\Cluster}$ will eventually be able to construct a commit certificate $\Cert{\Replica'}{\SignMessage{\Client'}{\Transaction'}, \rn}$.
    \item[Non-divergence] If replicas $\Replica_1, \Replica_2 \in \Cluster$ are able to construct commit certificates $\Cert{\Replica_1}{\SignMessage{\Client_1}{\Transaction_1}, \rn}$ and $\Cert{\Replica_2}{\SignMessage{\Client_2}{\Transaction_2}, \rn}$, respectively, then $\Transaction_1 = \Transaction_2$ and $\Client_1 = \Client_2$.
\end{description}
\end{lemma}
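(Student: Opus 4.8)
The plan is to prove the two properties by separate arguments, since they rely on different parts of the \PBFT{} machinery. I would dispatch \textbf{Non-divergence} first, as it is a self-contained quorum-intersection argument. Suppose toward a contradiction that replicas $\Replica_1, \Replica_2 \in \Cluster$ construct commit certificates $\Cert{\Replica_1}{\SignMessage{\Client_1}{\Transaction_1}, \rn}$ and $\Cert{\Replica_2}{\SignMessage{\Client_2}{\Transaction_2}, \rn}$ with $(\Client_1, \Transaction_1) \neq (\Client_2, \Transaction_2)$. By the definition of a commit certificate, each bundles $\n - \f$ \Name{commit} messages for its request signed by \emph{distinct} replicas. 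Two subsets of size $\n - \f$ of the $\n$ replicas of $\Cluster$ must overlap in at least $2(\n-\f) - \n = \n - 2\f$ replicas; since $\n > 3\f$ we have $\n - 2\f \geq \f + 1$, so the overlap contains at least one replica in $\NonFaulty{\Cluster}$. That non-faulty replica would then have signed \Name{commit} messages for two distinct requests in the same round $\rn$, contradicting the fact that a non-faulty replica follows the protocol and broadcasts at most one \Name{commit} message per round. Hence $\Transaction_1 = \Transaction_2$ and $\Client_1 = \Client_2$.

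For \textbf{Termination}, I would start from the observation that a commit certificate held by $\Replica$ witnesses $\n - \f$ \Name{commit} messages for a single proposal $m = \SignMessage{\Client}{\Transaction}$, of which at least $\n - 2\f \geq \f + 1$ originate from non-faulty replicas. Each such non-faulty replica only broadcasts \Name{commit} after collecting $\n - \f$ matching \Name{prepare} messages, so $m$ is \emph{prepared} at a quorum. Under reliable bounded-delay communication these non-faulty replicas retransmit, so if the current primary is non-faulty, every replica in $\NonFaulty{\Cluster}$ gathers $\n - \f$ \Name{commit} messages and assembles its own certificate. If the primary is faulty and progress stalls, I would invoke the \PBFT{} view-change protocol: timeouts trigger successive view changes until a non-faulty primary is installed. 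The invariant I would carry through the new-view construction is that any request prepared at a quorum survives every view change, because the $\n - \f$ prepared replicas intersect in at least one non-faulty replica the $\n - \f$ replicas whose state seeds the new view. Consequently $m$ is eventually re-proposed and committed by every replica in $\NonFaulty{\Cluster}$, so each can construct a commit certificate $\Cert{\Replica'}{\SignMessage{\Client'}{\Transaction'}, \rn}$ (which, by Non-divergence, is in fact a certificate for $m$ itself).

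The main obstacle is the Termination argument: liveness depends on the full correctness of \PBFT{}'s view-change and checkpoint machinery, in particular on showing that view-change messages faithfully transfer prepared certificates so that no committed request is lost across an unbounded sequence of primary failures. Since this lemma is precisely the guarantee established by Castro et al.~\cite{pbft,pbftj}, I would lean on their detailed analysis for the view-change invariants rather than reconstruct them, and devote my own verification to checking that our restricted use of signatures---only on client requests and \Name{commit} messages, with message authentication codes elsewhere---does not weaken either property, since those are exactly the forwarded messages whose integrity the certificates and the view-change transfer depend upon.
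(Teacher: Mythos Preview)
The paper does not supply its own proof of this lemma: it is stated with attribution to Castro et al.\ and followed only by a short informal discussion noting that non-divergence follows from the normal-case working of \PBFT{} and that termination is ensured by \PBFT{}'s view-change and checkpoint protocols. So there is no detailed paper proof to compare against; your proposal is already substantially more explicit than what the paper itself offers, and your decision to lean on~\cite{pbft,pbftj} for the view-change invariants is exactly what the paper does.

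One point worth tightening in your non-divergence argument: the step ``a non-faulty replica \ldots\ broadcasts at most one \Name{commit} message per round'' is only immediate within a single view. If the two certificates were assembled in different views, the overlapping non-faulty replica could in principle have sent \Name{commit} messages in each view; ruling out distinct requests then requires precisely the view-change invariant you invoke later for termination (a request prepared at a quorum survives every view change, so a non-faulty replica never prepares---hence never commits---a conflicting request in a later view). You clearly know this mechanism; just move the appeal to it into the non-divergence argument as well, or phrase the invariant as ``a non-faulty replica never commits two distinct requests for the same round across all views'' and note that this is what the \PBFT{} view-change analysis establishes.
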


From Lemma~\ref{lem:pbft}, we conclude that all commit certificates made by replicas in $\Cluster$ for round $\rn$ 
show commitment to the same client request $\SignMessage{\Client}{\Transaction}$. 
Hence, we write $\Cert{\Cluster}{\SignMessage{\Client}{\Transaction}, \rn}$, to represent a 
commit certificate from some replica in cluster $\Cluster$.

To guarantee the correctness of \PBFT{} (Lemma~\ref{lem:pbft}), we need to prove that both non-divergence and termination hold.
From the normal-case working outlined above and in Figure~\ref{fig:pbft}, \PBFT{} guarantees non-divergence 
independent of the behavior of the primary or any malicious replicas. 

To guarantee termination when communication is reliable and has bounded delay, \PBFT{} uses \emph{view-changes} and \emph{checkpoints}. If the primary is faulty and prevents any replica from making progress, then the \emph{view-change protocol} enables non-faulty replicas to reliably detect primary failure, recover a common non-divergent state, and trigger primary replacement until a non-faulty primary is found. After a successful view-change, progress is resumed. We refer to these \PBFT{}-provided view-changes as \emph{local view-changes}. The \emph{checkpoint protocol} enables non-faulty replicas to recover from failures and malicious behavior that do not trigger a view-change.

\subsection{Inter-Cluster Sharing}\label{ss:intershare}
Once a cluster has completed local replication of a client request, it proceeds with the second step: sharing the client request with all other clusters. Let $\Clusters$ be a system and $\Cluster \in \Clusters$ be a cluster. After $\Cluster$ reaches local consensus on client request $\SignMessage{\Client}{\Transaction}$ in round $\rn$---enabling construction of the commit certificate $\Cert{\Cluster}{\SignMessage{\Client}{\Transaction}, \rn}$ that proves local consensus---$\Cluster$ needs to exchange this client request and the accompanying proof with all other clusters.  
This exchange step requires global inter-cluster communication, which we want to minimize while retaining 
the ability to reliably detect failure of the sender. 
However, minimizing this inter-cluster communication is not as straightforward as it sounds, which we illustrate next:

\begin{example}\label{ex:sopti_inters}
Let $\Clusters$ be a system with two clusters $\Cluster_1, \Cluster_2 \in \Clusters$. 
Consider a simple global communication protocol in which a message $m$ is sent from $\Cluster_1$ to $\Cluster_2$ by 
requiring the primary $\PrimaryC{\Cluster_1}$ to send $m$ to the primary $\PrimaryC{\Cluster_2}$ 
(which can then disseminate $m$ in $\Cluster_2$). 
In this protocol, the replicas in $\Cluster_2$ cannot determine what went wrong if they do not receive any messages. To show this, we distinguish two cases:
\begin{enumerate}[wide,nosep,label=(\arabic*)]
\item $\PrimaryC{\Cluster_1}$ is Byzantine and behaves correctly toward every replica, except that it never sends messages to $\PrimaryC{\Cluster_2}$, while $\PrimaryC{\Cluster_2}$ is non-faulty.
\item $\PrimaryC{\Cluster_1}$ is non-faulty, while $\PrimaryC{\Cluster_2}$ is Byzantine and behaves correctly toward every replica, except that it drops all messages sent by $\PrimaryC{\Cluster_1}$.
\end{enumerate}

In both cases, the replicas in $\Cluster_2$ do not receive any messages from $\Cluster_1$, while both clusters see correct behavior of their primaries with respect to local consensus. Indeed, with this little amount of communication, it is impossible for replicas in $\Cluster_2$ to determine whether $\PrimaryC{\Cluster_1}$ is faulty (and did not send any messages) or $\PrimaryC{\Cluster_2}$ is faulty (and did not forward any received messages from $\Cluster_1$). 
\end{example}

In \GeoBFT{}, we employ an \emph{optimistic} approach to reduce communication among the clusters. 
Our optimistic approach consists of a low-cost normal-case protocol that will succeed when communication is reliable and 
the primary of the sending cluster is non-faulty. 
To deal with any failures, we use a \emph{remote view-change} protocol that guarantees eventual normal-case behavior when communication is reliable. 
First, we describe the normal-case protocol, after which we will describe in detail the remote view-change protocol.

\paragraph*{Optimistic inter-cluster sending}
In the optimistic case, where participants are non-faulty, we want to send a minimum number of messages while retaining the ability to reliably detect failure of the sender. In Example~\ref{ex:sopti_inters}, we already showed that sending only a single message is not sufficient. Sending $\f+1$ messages is sufficient, however.

Let $m = (\SignMessage{\Client}{\Transaction}, \Cert{\Cluster_1}{\SignMessage{\Client}{\Transaction}, \rn})$ 
be the message that some replica in cluster $\Cluster_1$ needs to send to some replicas $\Cluster_2$. 
Note that $m$ includes the request replicated in $\Cluster_1$ in round $\rn$, and the commit-certificate, 
which is the proof that such a replication did take place.
Based on the observations made above, we propose a two-phase normal-case global sharing protocol. 
We sketch this normal-case sending protocol in Figure~\ref{fig:nc_send_sketch} and 
present the detailed pseudo-code for this protocol in Figure~\ref{fig:nc_send}.

\begin{figure}[t!]
    \centering
    \begin{tikzpicture}[yscale=0.38,xscale=1.05]
            \draw[draw=purple!20,fill=purple!20] (1, 0) rectangle (3, 4.5);
    
        \draw[thick,draw=black!75] (0.75, 4.5) edge[green!50!black!90] ++(4.25, 0)
                                   (0.75,   0) edge ++(4.25, 0)
                                   (0.75,   1) edge ++(4.25, 0)
                                   (0.75,   2) edge ++(4.25, 0)
                                   (0.75,   3) edge[blue!50!black!90] ++(4.25, 0);

        \draw[thin,draw=black!75] (1,   0) edge ++(0, 4.5)
                                  (3, 0) edge ++(0, 4.5)
                                  (5,  0) edge ++(0, 4.5);

        \node[left] at (0.8, 0) {$\Replica_{2,3}$};
        \node[left] at (0.8, 1) {$\Replica_{2,2}$};
        \node[left] at (0.8, 2) {$\Replica_{2,1}$};
        \node[left] at (0.8, 3) {$\PrimaryC{\Cluster_2}$};
        \node[left] at (0.8, 4.5) {$\PrimaryC{\Cluster_1}$};

        \path[->] (1, 4.5) edge (3, 3)
                           edge (3, 2)
                  
                  (3, 3) edge (5, 2) edge (5, 1) edge (5, 0)
                  (3, 2) edge (5, 3) edge (5, 1) edge (5, 0)
                           ;

        \node[below,label] at (2, 0) {Global phase};
        \node[below,label] at (4, 0) {Local phase};

        \draw[decoration={brace},decorate] (0.15, -0.4) -- (0.15, 3.4);
        \node[left=2pt] at (0.15, 1.5) {$\Cluster_2$};
    \end{tikzpicture}%
    \vspace{-4mm}
    \caption{A schematic representation of the normal-case working of the global sharing protocol used by $\Cluster_1$ to send $m = (\SignMessage{\Client}{\Transaction}, \Cert{\Cluster_1}{\SignMessage{\Client}{\Transaction}, \rn})$ to $\Cluster_2$.}\label{fig:nc_send_sketch}
\end{figure}

\begin{figure}[t!]
    \begin{myprotocol}
        \TITLE{The global phase}{used by the primary $\PrimaryC{\Cluster_1}$}
        \STATE Choose a set $S$ of $\f+1$ replicas in $\Cluster_2$.
        \STATE Send $m$ to each replica in $S$.\label{fig:nc_send:send}
        \SPACE
        \TITLE{The local phase}{used by replicas $\Replica \in \Cluster_2$}
        \EVENT{receive $m$ from a replica $\Replica[q] \in \Cluster_1$}
            \STATE Broadcast $m$ to all replicas in $\Cluster_2$.\label{fig:nc_send:forward}
        \ENDEVENT
    \end{myprotocol}
    \caption{The normal-case global sharing protocol used by $\Cluster_1$ to send $m = (\SignMessage{\Client}{\Transaction}, \Cert{\Cluster_1}{\SignMessage{\Client}{\Transaction}, \rn})$ to $\Cluster_2$.}\label{fig:nc_send}
\end{figure}

In the \emph{global phase}, the primary $\PrimaryC{\Cluster_1}$ sends $m$ to $\f+1$ replicas in $\Cluster_2$. 
In the \emph{local phase}, each non-faulty replica $\Replica \in \NonFaulty{\Cluster_2}$ that 
receives a well-formed $m$ forwards $m$ to all replicas in its cluster $\Cluster_2$. 

\begin{proposition}\label{prop:nc_send}
Let $\Clusters$ be a system, let $\Cluster_1, \Cluster_2 \in \Clusters$ be two clusters, and let $m = (\SignMessage{\Client}{\Transaction}, \Cert{\Cluster_1}{\SignMessage{\Client}{\Transaction}, \rn})$ be the message $\Cluster_1$ sends to $\Cluster_2$ using the normal-case global sharing protocol of Figure~\ref{fig:nc_send}. We have the following:
\begin{description}[nosep]
    \item[Receipt] If the primary $\PrimaryC{\Cluster_1}$ is non-faulty and communication is reliable, then every replica in $\Cluster_2$ will eventually receive $m$.
    \item[Agreement] Replicas in $\Cluster_2$ will only accept client request $\SignMessage{\Client}{\Transaction}$ from $\Cluster_1$ in round $\rn$.
\end{description}
\end{proposition}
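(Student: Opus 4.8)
The plan is to handle the two claims separately, since they rest on quite different ingredients: \textbf{Receipt} is a liveness-style argument that reduces to counting non-faulty recipients, while \textbf{Agreement} is a safety-style argument that rests on the unforgeability of commit certificates together with the non-divergence already established for \PBFT{} in Lemma~\ref{lem:pbft}.

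For \textbf{Receipt}, I would first observe that in the global phase the non-faulty primary $\PrimaryC{\Cluster_1}$ sends $m$ to a set $S$ of $\f+1$ replicas in $\Cluster_2$. Since $\abs{\Faulty{\Cluster_2}} \leq \f < \f + 1 = \abs{S}$, a pigeonhole argument guarantees that $S$ contains at least one non-faulty replica $\Replica \in \NonFaulty{\Cluster_2}$. As communication is reliable and $\PrimaryC{\Cluster_1}$ is non-faulty, this $\Replica$ eventually receives the well-formed message $m$ and, following the local phase, broadcasts $m$ to every replica in $\Cluster_2$. Reliability of communication then ensures that every replica in $\Cluster_2$ eventually receives $m$, which is exactly the Receipt claim.

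For \textbf{Agreement}, the key is that a replica in $\Cluster_2$ only accepts a request for round $\rn$ from $\Cluster_1$ when it is accompanied by a well-formed commit certificate $\Cert{\Cluster_1}{\cdot, \rn}$, and that such a certificate contains $\n - \f > 2\f$ \Name{commit} messages for the same request signed by distinct replicas of $\Cluster_1$. I would argue by contradiction: suppose a replica in $\Cluster_2$ could accept two distinct requests $\SignMessage{\Client}{\Transaction}$ and $\SignMessage{\Client'}{\Transaction'}$ for the same round $\rn$. Each would be backed by a set of $\n - \f$ distinct signing replicas in $\Cluster_1$, and two such sets intersect in at least $2(\n-\f) - \n = \n - 2\f > \f$ replicas, so at least one non-faulty replica of $\Cluster_1$ would have signed \Name{commit} messages for both requests in round $\rn$. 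Since a non-faulty replica follows \PBFT{} and commits to at most one request per round, this is impossible. Equivalently, this is precisely the content of the non-divergence guarantee of Lemma~\ref{lem:pbft}, which tells us every commit certificate built in $\Cluster_1$ for round $\rn$ is for the same $\SignMessage{\Client}{\Transaction}$; hence $\Cluster_2$ can only ever see, and therefore only ever accept, that single request.

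The main obstacle I anticipate is pinning down the Agreement argument rigorously, since it hinges on the exact definition of a valid commit certificate (the requirement of $\n - \f$ distinct signatures) together with the unforgeability of digital signatures. I must make sure that a Byzantine primary $\PrimaryC{\Cluster_1}$, or any set of up to $\f$ Byzantine replicas in $\Cluster_1$, cannot manufacture an alternative certificate for a different request: this is exactly where the bound $\n > 3\f$ (giving $\n - 2\f > \f$) is needed so that quorum intersection forces a non-faulty signer into both certificates. The Receipt direction, by contrast, is routine once the pigeonhole observation on $\abs{S} = \f+1$ is in place.
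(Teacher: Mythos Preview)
Your proposal is correct and follows essentially the same approach as the paper: for \textbf{Receipt} the paper uses exactly your pigeonhole argument on the $\f+1$ recipients, and for \textbf{Agreement} it appeals to unforgeability of the $\n-\f > \f$ signatures together with the non-divergence clause of Lemma~\ref{lem:pbft}. The only cosmetic difference is that the paper invokes Lemma~\ref{lem:pbft} directly, whereas you additionally unfold the underlying quorum-intersection argument; since you yourself note the equivalence, this is elaboration rather than a different route.
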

\begin{proof}
If the primary $\PrimaryC{\Cluster_1}$ is non-faulty and communication is reliable, 
then $\f+1$ replicas in $\Cluster_2$ will receive $m$ (Line~\ref{fig:nc_send:send}). As at most $\f$ replicas in $\Cluster_2$ are Byzantine, at least one of these receiving replicas is non-faulty and will forward this message $m$ to all replicas in $\Cluster_2$ (Line~\ref{fig:nc_send:forward}), proving termination.

The commit certificate $\Cert{\Cluster_1}{\SignMessage{\Client}{\Transaction}, \rn}$ cannot be forged by faulty replicas, 
as it contains signed $\Name{commit}$ messages from $\n-\f > \f$ replicas. Hence, the integrity of any message $m$ forwarded by replicas in $\Cluster_2$ can easily be verified. Furthermore, Lemma~\ref{lem:pbft} rules out the existence of any other messages $m' =\Cert{\Cluster_1}{\SignMessage{\Client'}{\Transaction'}, \rn}$, proving agreement.
\end{proof}

We notice that there are two cases in which replicas in $\Cluster_2$ 
do not receive $m$ from $\Cluster_1$: either $\PrimaryC{\Cluster_1}$ is faulty and 
did not send $m$ to $\f+1$ replicas in $\Cluster_2$, or communication is unreliable, and messages are delayed or lost. 
In both cases, non-faulty replicas in $\Cluster_2$ initiate \emph{remote view-change} to force primary replacement in $\Cluster_1$ (causing replacement of the primary $\PrimaryC{\Cluster_1}$).

\paragraph*{Remote view-change}
The normal-case global sharing protocol outlined will only succeed if communication is reliable and the primary of the sending cluster is non-faulty. 
To recover from any failures, we provide a remote view-change protocol. Let $\Clusters = \{ \Cluster_1, \dots, \Cluster_{\z} \}$ be a system. 
To simplify presentation, we focus on the case in which the primary of cluster $\Cluster_1$ fails to send $m = (\SignMessage{\Client}{\Transaction}, \Cert{\Cluster_1}{\SignMessage{\Client}{\Transaction}, \rn})$ to replicas of $\Cluster_2$. Our remote view-change protocol consists of four phases, which we detail next.

First, non-faulty replicas in cluster $\Cluster_2$ detect the failure of the current primary $\PrimaryC{\Cluster_1}$ 
of $\Cluster_1$ to send $m$.
Note that although the replicas in $\Cluster_2$ have no information about the contents of message $m$, they are 
awaiting arrival of a well-formed message $m$ from $\Cluster_1$ in round $\rn$. 
Second, the non-faulty replicas in $\Cluster_2$ initiate agreement on failure detection. 
Third, after reaching agreement, the replicas in $\Cluster_2$ send their request for a 
remote view-change to the replicas in $\Cluster_1$ in a reliable manner. 
In the fourth and last phase, the non-faulty replicas in $\Cluster_1$ trigger a local view-change, replace $\PrimaryC{\Cluster_1}$, 
and instruct the new primary to resume global sharing with $\Cluster_2$. 
Next, we explain each phase in detail.

To be able to detect failure, $\Cluster_2$ must assume reliable communication with bounded delay. This allows the usage of timers to detect failure. To do so, every replica $\Replica \in \Cluster_2$ sets a timer for $\Cluster_1$  at the start of round $\rn$ and waits until it receives a valid message $m$ from $\Cluster_1$. If the timer expires before $\Replica$ receives such an $m$, then $\Replica$ detects failure of $\Cluster_1$ in round $\rn$.  Successful detection will eventually lead to a remote view-change request.

From the perspective of $\Cluster_1$, remote view-changes are controlled by external parties. This leads to several challenges not faced by traditional \PBFT{} view-changes (the local view-changes used within clusters, e.g., as part of local replication):
\begin{enumerate}[wide,nosep,label=(\arabic*)]
\item A remote view-change  in $\Cluster_1$ requested by $\Cluster_2$ should only trigger at most a single local view-change in $\Cluster_1$, otherwise remote view-changes enable \emph{replay attacks}.
\item While replicas in $\Cluster_1$ detect failure of $\PrimaryC{\Cluster_1}$ and initiate local view-change, it is possible that $\Cluster_2$ detects failure of $\Cluster_1$ and requests remote view-change in $\Cluster_1$. In this case, only a single successful view-change in $\Cluster_1$ is necessary.
\item Likewise, several clusters $\Cluster_2, \dots, \Cluster_{\z}$ can simultaneously detect failure of $\Cluster_1$ and request remote view-change in $\Cluster_1$. Also in this case, only a single successful view-change in $\Cluster_1$ is necessary.
\end{enumerate}

Furthermore, a remote view-change request for cluster $\Cluster_1$ cannot depend on any information 
only available to $\Cluster_1$ (e.g., the current primary $\PrimaryC{\Cluster_1}$ of $\Cluster_1$). 
Likewise, the replicas in $\Cluster_1$ cannot determine which messages (for which rounds) 
have already been sent by previous (possibly malicious) primaries of $\Cluster_1$: remote view-change requests must include this information. 
Our remote view-change protocol addresses each of these concerns. In Figures~\ref{fig:rvc_sketch} and~\ref{fig:rvc}, we sketch this protocol and its pseudo-code. Next, we describe the protocol in detail.

\begin{figure}[t!]
    \centering
    \begin{tikzpicture}[yscale=0.4,xscale=1.05]
        \draw[draw=purple!20,fill=purple!20] (2.5, 0) rectangle (4, 7.5);
        \draw[thick,draw=black!75] (0.75, 4.5) edge ++(6.75, 0)
                                   (0.75, 5.5) edge ++(6.75, 0)
                                   (0.75, 6.5) edge ++(6.75, 0)
                                   (0.75, 7.5) edge ++(6.75, 0)
                                   (0.75,   0) edge ++(6.75, 0)
                                   (0.75,   1) edge ++(6.75, 0)
                                   (0.75,   2) edge ++(6.75, 0)
                                   (0.75,   3) edge ++(6.75, 0);

        \draw[thin,draw=black!75] (1, 0) edge ++(0, 3)
                                  (2.5, 0) edge ++(0, 7.5)
                                  
                                  (4, 0) edge ++(0, 7.5)
                                  (5.5, 4.5) edge ++(0, 3)
                                  (7.5, 4.5) edge ++(0, 3)
                                  ;

        \node[left] at (0.8, 0) {$\Replica_{2,4}$};
        \node[left] at (0.8, 1) {$\Replica_{2,3}$};
        \node[left] at (0.8, 2) {$\Replica_{2,2}$};
        \node[left] at (0.8, 3) {$\Replica_{2,1}$};

        \node[left] at (0.8, 4.5) {$\Replica_{1,4}$};
        \node[left] at (0.8, 5.5) {$\Replica_{1,3}$};
        \node[left] at (0.8, 6.5) {$\Replica_{1,2}$};
        \node[left] at (0.8, 7.5) {$\Replica_{1,1}$};

        \path[->] (1, 0) edge (2.5, 1) edge (2.5, 2) edge (2.5, 3)
                  (1, 1) edge (2.5, 0) edge (2.5, 2) edge (2.5, 3)
                  (1, 2) edge (2.5, 0) edge (2.5, 1) edge (2.5, 3)
                  (1, 3) edge (2.5, 0) edge (2.5, 1) edge (2.5, 2)

                  (2.5, 0) edge (4, 4.5)
                  (2.5, 1) edge (4, 5.5)
                  (2.5, 2) edge (4, 6.5)
                  (2.5, 3) edge (4, 7.5)
                  
                  (4, 4.5) edge (5.5, 5.5) edge (5.5, 6.5) edge (5.5, 7.5)
                  (4, 5.5) edge (5.5, 4.5) edge (5.5, 6.5) edge (5.5, 7.5)
                  (4, 6.5) edge (5.5, 4.5) edge (5.5, 5.5) edge (5.5, 7.5)
                  (4, 7.5) edge (5.5, 4.5) edge (5.5, 5.5) edge (5.5, 6.5)
                  
         ;
         
        \draw[draw=orange!60,fill=orange!40,rounded corners]  (5.55, 4) rectangle (7.45, 8);
        \node[align=center,label] at (6.5, 6) {Detection \&\\view-change\\in $\Cluster_1$ (\PBFT{})};

        \node[below,label,align=center] at (1.75, 0) {\Name{DRvc}};
        \node[below,label,align=center] at (3.25, 0) {\Name{Rvc}};
        \node[below,label,align=center] at (4.75, 0) {(forward)};
        
        \node[above,label,align=center] at (1, 8.25) {Detection\\(in $\Cluster_2$)} edge[thick,->] (1, 3);
        \node[above,label,align=center] at (2.5, 8.25) {Agreement\\(in $\Cluster_2$)}  edge[thick,->] (2.5, 3);
        \node[above,label,align=center] at (4, 8.25) {Request\\view-change};

        \draw[decoration={brace},decorate,thick] (2.5, 8) -- (5.5, 8);

        \draw[decoration={brace},decorate] (0.15, -0.4) -- (0.15, 3.4);
        \node[left=2pt] at (0.15, 1.5) {$\Cluster_2$};
        
        \draw[decoration={brace},decorate] (0.15, 4.1) -- (0.15, 7.9);
        \node[left=2pt] at (0.15, 6) {$\Cluster_1$};
    \end{tikzpicture}%
    \vspace{-4mm}
    \caption{A schematic representation of the remote view-change protocol of \GeoBFT{} running at a system $\Clusters$ over $\Replicas$. This protocol is triggered when a cluster $\Cluster_2 \in \Clusters$ expects a message from $\Cluster_1 \in \Clusters$, but does not receive this message in time.}\label{fig:rvc_sketch}
\end{figure}
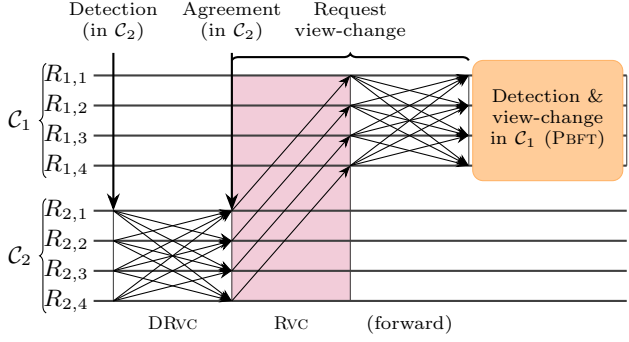

\begin{figure}[t!]
    \begin{myprotocol}
        \TITLE{Initiation role}{used by replicas $\Replica \in \Cluster_2$}
        \STATE $v_1 \GETS 0$ (number of remote view-changes in $\Cluster_1$ requested by $\Replica$).
        \EVENT{detect failure of $\Cluster_1$ in round $\rn$}
            \STATE Broadcast $\Message{DRvc}{\Cluster_1, \rn, v_1}$ to all replicas in $\Cluster_2$.\label{fig:rvc:drvc}
            \STATE $v_1 \GETS v_1 + 1$.
        \ENDEVENT
        \EVENT{$\Replica$ receives $\Message{DRvc}{\Cluster_1, \rn, v_1}$ from $\Replica' \in \Cluster_2$}\label{fig:rvc:resend}
            \IF{$\Replica$ received $(\SignMessage{\Client}{\Transaction}, \Cert{\Cluster}{\SignMessage{\Client}{\Transaction}, \rn})$ from $\Replica[q] \in \Cluster_1$}
                \STATE Send $(\SignMessage{\Client}{\Transaction}, \Cert{\Cluster}{\SignMessage{\Client}{\Transaction}, \rn})$ to $\Replica'$.
            \ENDIF
        \ENDEVENT
        \EVENT{$\Replica$ receives $\Message{DRvc}{\Cluster_1, \rn, v_1'}$ from $\f+1$ replicas in $\Cluster_2$}\label{fig:rvc:strength}
            \IF{$v_1 \leq v_1'$}
                \STATE $v_1 \GETS v_1'$.
                \STATE Detect failure of $\Cluster_1$ in round $\rn$ (if not yet done so).
            \ENDIF
        \ENDEVENT
        \EVENT{$\Replica$ receives $\Message{DRvc}{\Cluster_1, \rn, v_1}$ from $\n-\f$ replicas in $\Cluster_2$}\label{fig:rvc:request}
            \STATE Send $\SignMessage{\Replica}{\Message{Rvc}{\Cluster_1, \rn, v_1}}$ to $\Replica[q] \in \Cluster_1$, $\ID{\Replica} = \ID{\Replica[q]}$.\label{fig:rvc:rvcs}
        \ENDEVENT
        \SPACE
        \TITLE{Response role}{used by replicas $\Replica[q] \in \Cluster_1$}
        \EVENT{$\Replica[q]$ receives $\SignMessage{\Replica}{\Message{Rvc}{\Cluster_1, \rn, v}}$ from $\Replica$, $\Replica \in (\Replicas \difference \Cluster_1)$}\label{fig:rvc:bc1}
            \STATE Broadcast $\SignMessage{\Replica}{\Message{Rvc}{\Cluster_1, \rn, v}}$ to all replicas in $\Cluster_1$.        
        \ENDEVENT
        \EVENT{$\Replica[q]$ receives $\SignMessage{\Replica_i}{\Message{Rvc}{\Cluster_1, \rn, v}}$, $1 \leq i \leq \f+1$, such that:
            \begin{enumerate}[nosep]
                \item $\{ \Replica_i \mid 1\leq i \leq \f+1\} \subset \Cluster'$, $\Cluster' \in \Clusters$;
                \item $\abs{\{ \Replica_i \mid 1\leq i \leq \f+1\}} = \f+1$;
                \item no recent local view-change was triggered; and
                \item $\Cluster'$ did not yet request a $v$-th remote view-change
            \end{enumerate}}\label{fig:rvc:trigger}
            \STATE Detect failure of $\PrimaryC{\Cluster_1}$ (if not yet done so).
        \ENDEVENT
    \end{myprotocol}
    \caption{The remote view-change protocol of \GeoBFT{} running at a system $\Clusters$ over $\Replicas$. This protocol is triggered when a cluster $\Cluster_2 \in \Clusters$ expects a message from $\Cluster_1 \in \Clusters$, but does not receive this message in time.}\label{fig:rvc}
\end{figure}

Let $\Replica \in \Cluster_2$ be a replica that detects failure of $\Cluster_1$ in round $\rn$ and 
has already requested $v_1$ remote view-changes in $\Cluster_1$. 
Once a replica $\Replica$ detects a failure, it initiates the process of reaching 
an agreement on this failure among other replicas of its cluster $\Cluster_2$.
It does so by broadcasting message $\Message{DRvc}{\Cluster_1, \rn, v_1}$ to all replicas in 
$\Cluster_2$ (Line~\ref{fig:rvc:drvc} of Figure~\ref{fig:rvc}). 

Next, $\Replica$ waits until it receives identical $\Message{DRvc}{\Cluster_1, \rn, v_1}$ messages from $\n - \f$ distinct replicas in $\Cluster_2$ (Line~\ref{fig:rvc:request} of Figure~\ref{fig:rvc}). 
This guarantees that there is agreement among the non-faulty replicas in $\Cluster_2$ that $\Cluster_1$ has failed. 
After receiving these $\n-\f$ messages, $\Replica$ requests a remote view-change by 
sending message $\SignMessage{\Replica}{\Message{Rvc}{\Cluster_1, \rn, v_1}}$ to the 
replica $\Replica[q] \in \Cluster_1$ with $\ID{\Replica} = \ID{\Replica[q]}$ (Line~\ref{fig:rvc:rvcs} of Figure~\ref{fig:rvc}).

In case some other replica $\Replica' \in \Cluster_2$ received $m$ from $\Cluster_1$, then
$\Replica'$ would respond with message $m$ in response to the message $\Message{DRvc}{\Cluster_1, \rn, v}$  
(Line~\ref{fig:rvc:resend} of Figure~\ref{fig:rvc}). 
This allows $\Replica$ to recover in cases where it could not reach an agreement on the failure of $\Cluster_1$.
Finally, some replica $\Replica' \in \Cluster_2$ may detect the failure of $\Cluster_1$ later than $\Replica$. 
To handle such a case, we require each replica $\Replica'$ that receives identical $\Message{DRvc}{\Cluster_1, \rn, v}$ 
messages from $\f+1$ distinct replicas in $\Cluster_2$ to assume that the cluster $\Cluster_1$ has failed.
This assumption is valid as one of these $\f+1$ messages must have come from a 
non-faulty replica in $\Cluster_2$, which must have detected the failure of cluster $\Cluster_1$ successfully 
(Line~\ref{fig:rvc:strength} of Figure~\ref{fig:rvc}).

If replica $\Replica[q] \in \Cluster_1$ receives a remote view-change request 
$m_{\Name{Rcv}} = \SignMessage{\Replica}{\Message{Rvc}{\rn, v}}$ from $\Replica \in \Cluster_2$, 
then $\Replica[q]$ verifies whether $m_{\Name{Rcv}}$ is well-formed. 
If $m_{\Name{Rcv}}$ is well-formed, $\Replica[q]$ forwards $m_{\Name{Rcv}}$ to all replicas in 
$\Cluster_1$ (Line~\ref{fig:rvc:bc1} of Figure~\ref{fig:rvc}). 
Once $\Replica[q]$ receives $\f + 1$ messages identical to $m_{\Name{Rcv}}$, 
signed by distinct replicas in $\Cluster_2$, it concludes that at least one of 
these remote view-change requests must have come from a non-faulty replica in $\Cluster_2$. 
Next, $\Replica[q]$ determines whether it will honor this remote view-change request, which $\Replica[q]$ will do when no concurrent local view-change is in progress and when this is the first $v$-th remote view-change requested by $\Cluster_2$ (the lather prevents replay attacks). 
If these conditions are met, $\Replica[q]$ detects its current primary $\PrimaryC{\Cluster_1}$ as faulty (Line~\ref{fig:rvc:trigger} of Figure~\ref{fig:rvc}). 

When communication is reliable, the above protocol ensures that all non-faulty replicas in $\Cluster_1$ 
will detect failure of $\PrimaryC{\Cluster_1}$. 
Hence, eventually a successful local view-change will be triggered in $\Cluster_1$.
When a new primary in $\Cluster_1$ is elected, it takes one of the remote view-change requests it received and 
determines the rounds for which it needs to send requests 
(using the normal-case global sharing protocol of Figure~\ref{fig:nc_send}). 
As replicas in $\Cluster_2$ do not know the exact communication delays, they use exponential back off to determine the timeouts used while detecting subsequent failures of $\Cluster_1$.

We are now ready to prove the main properties of remote view-changes.

\begin{proposition}\label{prop:remote_vc}
Let $\Clusters$ be a system, let $\Cluster_1, \Cluster_2 \in \Clusters$ be two clusters, and let $m = (\SignMessage{\Client}{\Transaction}, \Cert{\Cluster}{\SignMessage{\Client}{\Transaction}, \rn})$ be the message $\Cluster_1$ needs to send to $\Cluster_2$ in round $\rn$. If communication is reliable and has bounded delay, then either every replica in $\Cluster_2$  will receive $m$ or $\Cluster_1$ will perform a local view-change.
\end{proposition}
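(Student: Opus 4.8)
The plan is to split on whether any non-faulty replica of $\Cluster_2$ ever receives $m$ directly from a replica of $\Cluster_1$. This is the natural pivot, because a replica's round-$\rn$ timer for $\Cluster_1$ is cancelled exactly when such an $m$ arrives, and this same event is what triggers the local-phase broadcast of Line~\ref{fig:nc_send:forward}; the dichotomy therefore controls the receipt argument and the detection argument simultaneously.

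First case: some non-faulty $\Replica^* \in \NonFaulty{\Cluster_2}$ receives $m$ from $\Cluster_1$. Then $\Replica^*$ broadcasts $m$ to every replica of $\Cluster_2$ (Line~\ref{fig:nc_send:forward}), and by reliable communication every non-faulty replica eventually receives it; the commit certificate inside $m$ is unforgeable, so each recipient accepts $m$, exactly as in the Receipt argument of Proposition~\ref{prop:nc_send}. This establishes the first disjunct.

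Second (main) case: no non-faulty replica of $\Cluster_2$ ever receives $m$ from $\Cluster_1$. Then no non-faulty timer is cancelled, so under bounded delay every non-faulty replica's round-$\rn$ timer expires and it detects failure, broadcasting $\Message{DRvc}{\Cluster_1, \rn, 0}$ on its first detection (Line~\ref{fig:rvc:drvc}), since each replica initializes $v_1 := 0$ and only increments after broadcasting. As all values observed in this first wave equal $0$, the amplification of Line~\ref{fig:rvc:strength} keeps every replica at $v_1 = 0$; hence each of the $\geq \n-\f$ non-faulty replicas broadcasts $\Message{DRvc}{\Cluster_1,\rn,0}$, so by reliable communication every non-faulty replica collects $\n-\f$ identical such messages (Line~\ref{fig:rvc:request}) and sends its signed $\SignMessage{\Replica}{\Message{Rvc}{\Cluster_1,\rn,0}}$ to its identifier-counterpart in $\Cluster_1$ (Line~\ref{fig:rvc:rvcs}). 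Each non-faulty counterpart forwards the request it receives to all of $\Cluster_1$ (Line~\ref{fig:rvc:bc1}); since $\geq \n-\f$ distinct replicas of $\Cluster_2$ issue signed requests and at most $\f$ of their counterparts are faulty, at least $\f+1$ requests are forwarded throughout $\Cluster_1$, so every non-faulty replica of $\Cluster_1$ eventually holds $\f+1$ requests signed by distinct replicas of the single cluster $\Cluster_2$, meeting conditions~(1) and~(2) of Line~\ref{fig:rvc:trigger}. Condition~(4) holds since this is the first ($v=0$) remote view-change requested by $\Cluster_2$; for condition~(3), if a local view-change was recently triggered in $\Cluster_1$ then $\Cluster_1$ has already performed one and we are done, and otherwise the condition is met. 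Thus every non-faulty replica of $\Cluster_1$ detects $\PrimaryC{\Cluster_1}$ as faulty, and \PBFT{}'s local view-change mechanism under reliable bounded-delay communication then guarantees that $\Cluster_1$ performs a local view-change, establishing the second disjunct.

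I expect the main obstacle to be the bookkeeping of the $v_1$ counters together with the replay-prevention conditions of Line~\ref{fig:rvc:trigger}: I must show that the first detection wave is coherent (all at value $0$) so that the $\n-\f$ threshold is genuinely reached by non-faulty replicas alone, and that conditions~(3) and~(4) cannot spuriously suppress the very first remote view-change, noting that the only way condition~(3) blocks is itself an instance of the desired conclusion. A secondary subtlety is the interaction between receipt and detection: a non-faulty replica may come to hold $m$ only indirectly (relayed inside $\Cluster_2$, or via the resend of Line~\ref{fig:rvc:resend}) without this cancelling its timer, so it still participates in detection; this is harmless, as it keeps such scenarios inside the remote-view-change cascade of the second case rather than breaking the receipt argument of the first.
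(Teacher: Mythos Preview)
Your proof is correct and follows essentially the same two-case decomposition and counting argument as the paper's own proof (split on whether some non-faulty replica of $\Cluster_2$ receives $m$; in the negative case, $\n-\f$ non-faulty senders map to counterparts in $\Cluster_1$ of which more than $\f$ are non-faulty and forward, triggering Line~\ref{fig:rvc:trigger}). The only cosmetic differences are that the paper invokes the resend of Line~\ref{fig:rvc:resend} rather than the local-phase broadcast of Line~\ref{fig:nc_send:forward} in the first case, and in the second case appeals to exponential back-off for eventual agreement on \emph{some} $v_1$ rather than arguing specifically for $v_1=0$; your treatment of conditions~(3) and~(4) at Line~\ref{fig:rvc:trigger} is in fact more explicit than the paper's.
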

\begin{proof}
Consider the remote view-change protocol of Figure~\ref{fig:rvc}. If a non-faulty replica $\Replica'\in \NonFaulty{\Cluster_2}$ receives $m$, then any replica in $\Cluster_2$ that did not receive $m$ will receive $m$ from $\Replica'$ (Line~\ref{fig:rvc:resend}). In all other cases, at least $\f+1$ non-faulty replicas in $\Cluster_2$ will not receive $m$ and will timeout. Due to exponential back-off, eventually each of these $\f+1$ non-faulty replicas will initiate and agree on the same $v_1$-th remote view-change. Consequently, all non-faulty replicas in $\NonFaulty{\Cluster_2}$ will participate in this remote view-change (Line~\ref{fig:rvc:strength}). As $\abs{\NonFaulty{\Cluster_2}} = \n - \f$, each of these $\n-\f$ replicas $\Replica \in \NonFaulty{\Cluster_2}$ will send $\SignMessage{\Replica}{\Message{Rvc}{\Cluster_1, \rn, v}}$ to some replica $\Replica[q] \in \Cluster_1$, $\ID{\Replica} = \ID{\Replica[q]}$ (Line~\ref{fig:rvc:request}). Let $S = \{ \Replica[q] \in \Cluster_1 \mid \Replica \in \NonFaulty{\Cluster_2} \land \ID{\Replica} = \ID{\Replica[q]} \}$ be the set of receivers in $\Cluster_1$ of these messages and let $T = S \intersect \NonFaulty{\Cluster_1}$. We have $\abs{S} = \n -\f > 2\f$ and, hence, $\abs{T} > \f$. Each replica $\Replica[q] \in T$ will broadcast the message it receives to all replicas in $\Cluster_1$ (Line~\ref{fig:rvc:bc1}). As $\abs{T} > \f$, this eventually triggers a local view-change in $\Cluster_1$ (Line~\ref{fig:rvc:trigger}).
\end{proof}

Finally, we use the results of Proposition~\ref{prop:nc_send} and Proposition~\ref{prop:remote_vc} to conclude

\begin{theorem}\label{thm:learning_z}
Let $\Clusters = \{\Cluster_1, \dots, \Cluster_{\z} \}$ be a system over $\Replicas$. If communication is reliable and has bounded delay, then every replica $\Replica \in \Replicas$ will, in round $\rn$, receive a set $\{ (\SignMessage{\Client_i}{\Transaction_i}, \Cert{\Cluster_i}{\SignMessage{\Client_i}{\Transaction_i}, \rn}) \mid (1 \leq i \leq \z) \land (\Client_i \in \Clients{\Cluster_i}) \}$ of $\z$ messages. These sets all contain identical client requests.
\end{theorem}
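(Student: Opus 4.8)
The plan is to fix a round $\rn$ and prove the statement one cluster at a time: for each $i$ with $1 \leq i \leq \z$, I would show that every replica $\Replica \in \Replicas$ eventually receives a message $m_i = (\SignMessage{\Client_i}{\Transaction_i}, \Cert{\Cluster_i}{\SignMessage{\Client_i}{\Transaction_i}, \rn})$, after which the full set of $\z$ messages follows by taking the union over all $i$. There are two kinds of receivers to handle. For replicas inside the sending cluster $\Cluster_i$ itself, I would invoke the Termination half of Lemma~\ref{lem:pbft}: once local replication in round $\rn$ lets one replica build a commit certificate, every non-faulty replica in $\Cluster_i$ also builds one and thus holds $m_i$ locally. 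For replicas in any other cluster $\Cluster_j$, $j \neq i$, the message must arrive through the inter-cluster machinery, which is where the two propositions come in.

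The core liveness argument combines Proposition~\ref{prop:nc_send} and Proposition~\ref{prop:remote_vc} in a loop. If the current primary $\PrimaryC{\Cluster_i}$ is non-faulty, the Receipt clause of Proposition~\ref{prop:nc_send} immediately gives that every replica in $\Cluster_j$ receives $m_i$. If it is faulty and withholds $m_i$, Proposition~\ref{prop:remote_vc} guarantees that either every replica in $\Cluster_j$ still receives $m_i$ (via the recovery line where a replica that already holds $m_i$ answers the \Name{DRvc} broadcast), or $\Cluster_i$ is forced to perform a local view-change and install a new primary that, as described just before the theorem, resumes global sharing for round $\rn$. I would then iterate this step. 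The key quantitative observation is that $\Cluster_i$ contains at most $\f$ faulty replicas, so at most $\f$ successive primaries can misbehave; after finitely many remote-view-change-induced local view-changes a non-faulty primary is installed, and the Receipt clause then closes the argument. I expect this termination-of-the-loop step to be the main obstacle: one must argue that each remote view-change makes monotone progress (the per-cluster counter $v_1$ and the replay-prevention condition in Line~\ref{fig:rvc:trigger} ensure a request triggers at most one local view-change, so the same faulty primary cannot be re-triggered indefinitely) and that bounded-delay communication together with the exponential back-off on the detection timers prevents spurious, endless view-changes once a correct primary is in place.

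Finally, I would establish that the received sets agree on client requests across all replicas. Within a single cluster $\Cluster_i$, the Non-divergence half of Lemma~\ref{lem:pbft} forces every round-$\rn$ commit certificate to certify the same request $\SignMessage{\Client_i}{\Transaction_i}$, and the Agreement clause of Proposition~\ref{prop:nc_send}---together with the unforgeability of the certificate, which carries more than $\f$ signatures---guarantees that no replica accepts a different request $\SignMessage{\Client_i'}{\Transaction_i'}$ from $\Cluster_i$ in round $\rn$. Hence the $i$-th component of the set that every replica assembles is uniquely determined by $\Cluster_i$ and $\rn$, so the sets collected by any two replicas contain identical client requests. Taking the union over $1 \leq i \leq \z$ yields the $\z$-element set claimed, completing the proof.
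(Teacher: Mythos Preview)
Your proposal is correct and follows essentially the same approach as the paper: fix a cluster $\Cluster_i$, apply Proposition~\ref{prop:nc_send} when the current primary is non-faulty, and otherwise invoke Proposition~\ref{prop:remote_vc} to force a local view-change, iterating until a non-faulty primary is installed. Your write-up is in fact more careful than the paper's own proof---you explicitly treat the receipt of $m_i$ by replicas \emph{inside} $\Cluster_i$ via the Termination clause of Lemma~\ref{lem:pbft}, you bound the number of iterations by the at-most-$\f$ faulty replicas in $\Cluster_i$, and you spell out the ``identical client requests'' claim via the Non-divergence clause of Lemma~\ref{lem:pbft} and the Agreement clause of Proposition~\ref{prop:nc_send}---whereas the paper leaves these points implicit.
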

\begin{proof}
Consider cluster $\Cluster_i \in \Clusters$. If $\PrimaryC{\Cluster_i}$ behaves reliable, then Proposition~\ref{prop:nc_send} already proves the statement with respect to $(\SignMessage{\Client_i}{\Transaction_i}, \Cert{\Cluster_i}{\SignMessage{\Client_i}{\Transaction_i}, \rn}) $. Otherwise, if $\PrimaryC{\Cluster_i}$ behaves Byzantine, then then Proposition~\ref{prop:remote_vc} guarantees that either all replicas in $\Replicas$ will receive $(\SignMessage{\Client_i}{\Transaction_i}, \Cert{\Cluster_i}{\SignMessage{\Client_i}{\Transaction_i}, \rn})$ or $\PrimaryC{\Cluster_i}$ will be replaced via a local view-change. Eventually, these local view-changes will lead to a non-faulty primary in $\Cluster_i$, after which Proposition~\ref{prop:nc_send} again proves the statement with respect to $(\SignMessage{\Client_i}{\Transaction_i}, \Cert{\Cluster_i}{\SignMessage{\Client_i}{\Transaction_i}, \rn})$.
\end{proof}

\subsection{Ordering and Execution}\label{ss:order_exec}
Once replicas of a cluster have chosen a client request for execution and have received all client requests chosen by other clusters, they are ready for the final step: ordering and executing these client requests. 
In specific, in round $\rn$, any non-faulty replica that has valid requests from all clusters can 
move ahead and execute these requests.

Theorem~\ref{thm:learning_z} guarantees after the local replication step (Section~\ref{ss:local_rep}) 
and the inter-cluster sharing step (Section~\ref{ss:intershare}) each replica in $\Replicas$ will receive 
the same set of $\z$ client requests in round $\rn$. 
Let $S_{\rn} = \{ (\SignMessage{\Client_i}{\Transaction_i} \mid (1 \leq i \leq \z) \land (\Client_i \in \Clients{\Cluster_i}) \}$ be this set of $\z$ client requests received by each replica.

The last step is to put these client requests in a unique order, execute them, and 
inform the clients of the outcome. 
To do so, \GeoBFT{} simply uses a pre-defined ordering on the clusters. 
For example, each replica executes the transactions in the order $[\Transaction_1, \dots, \Transaction_{\z}]$. 
Once the execution is complete, each replica $\Replica \in \Cluster_i$, $1 \leq i \leq \z$, 
informs the client $\Client_i$ of any outcome (e.g., confirmation of execution or the result of execution).  
Note that each replica $\Replica$ only informs its local clients.
As all non-faulty replicas are expected to act deterministic, execution will yield the same state and results across all non-faulty replicas.
Hence, each client $\Client_i$ is guaranteed to receive identical response from at least $\f+1$ replicas. As there are at most $\f$ faulty replicas per cluster and faulty replicas cannot impersonate non-faulty replicas, at least one of these $\f+1$ responses must come from a non-faulty replica. We conclude the following:

\begin{theorem}[\GeoBFT{} is a consensus protocol]\label{thm:geobft}
Let $\Clusters$ be a system over $\Replicas$ in which every cluster satisfies $\n > 3\f$. A single round of \GeoBFT{} satisfies the following two requirements: 
\begin{description}[nosep]
    \item[Termination] If communication is reliable and has bounded delay, then \GeoBFT{} guarantees that each non-faulty replica in $\Replicas$ executes $\z$ transactions.
    \item[Non-divergence] \GeoBFT{} guarantees that all non-faulty replicas execute the same $\z$ transaction.
\end{description}
\end{theorem}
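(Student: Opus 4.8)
The plan is to obtain both properties directly from the machinery already in place---Lemma~\ref{lem:pbft}, Proposition~\ref{prop:nc_send}, and, above all, Theorem~\ref{thm:learning_z}---together with the determinism of non-faulty replicas and the fixed cluster ordering of Section~\ref{ss:order_exec}. The heavy lifting has already been done, so the remaining argument amounts to assembling these pieces in the right way for each of the two requirements.

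For \emph{Non-divergence}, which must hold unconditionally, I would \emph{not} route the argument through Theorem~\ref{thm:learning_z} (it assumes reliable communication); instead I argue that any executing replica is forced to use the unique certified request of each cluster. First, the Non-divergence clause of Lemma~\ref{lem:pbft} shows that within each cluster $\Cluster_i$ all commit certificates for round $\rn$ certify a single request $\SignMessage{\Client_i}{\Transaction_i}$, which is exactly what justifies the unambiguous notation $\Cert{\Cluster_i}{\SignMessage{\Client_i}{\Transaction_i}, \rn}$. Next, the Agreement clause of Proposition~\ref{prop:nc_send}---resting on the fact that a commit certificate carries $\n - \f > \f$ signed \Name{commit} messages and hence cannot be forged---shows that any non-faulty replica in any cluster that accepts a round-$\rn$ request for $\Cluster_i$ accepts precisely $\SignMessage{\Client_i}{\Transaction_i}$. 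Therefore any two non-faulty replicas that reach the execution step in round $\rn$ must have assembled the identical set $\{ \SignMessage{\Client_i}{\Transaction_i} \mid 1 \leq i \leq \z \}$. Since \GeoBFT{} orders this set with the fixed ordering $[\Transaction_1, \dots, \Transaction_{\z}]$ and non-faulty replicas are deterministic, they execute the same sequence of $\z$ transactions and reach the same state, giving non-divergence.

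For \emph{Termination}, I would add the assumption of reliable bounded-delay communication and invoke Theorem~\ref{thm:learning_z} directly: under this assumption every replica $\Replica \in \Replicas$ eventually receives, in round $\rn$, a set of $\z$ certified requests, one per cluster (including its own). Once a non-faulty replica holds all $\z$ valid requests, the ordering-and-execution step cannot block it, as ordering is a purely local deterministic computation and execution is local; hence each non-faulty replica executes all $\z$ transactions, as required.

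The main obstacle is conceptual rather than technical: because \emph{Non-divergence} is a safety property it must be established with \emph{no} assumption on communication, so I cannot appeal to Theorem~\ref{thm:learning_z}. The crux is then to show that whenever a non-faulty replica does execute, the per-cluster request it used is forced to be the unique certified one, even if some replicas never execute at all. This is precisely where the unforgeability of commit certificates (the $\n - \f > \f$ bound) and the Non-divergence clause of Lemma~\ref{lem:pbft} carry the argument.
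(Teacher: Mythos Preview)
Your proof is correct, but it takes a more careful route than the paper. The paper's own proof is a single sentence: both termination and non-divergence are declared to be direct corollaries of Theorem~\ref{thm:learning_z}. You instead split the two properties and deliberately avoid Theorem~\ref{thm:learning_z} for non-divergence, on the grounds that Theorem~\ref{thm:learning_z} is stated under the hypothesis of reliable bounded-delay communication, whereas non-divergence must hold unconditionally. That observation is well taken: as written, Theorem~\ref{thm:learning_z} does not literally cover the case where communication is unreliable yet some non-faulty replicas nevertheless manage to assemble $\z$ certificates and execute. Your detour through the Non-divergence clause of Lemma~\ref{lem:pbft} and the Agreement clause of Proposition~\ref{prop:nc_send} (both of which are unconditional on communication) closes exactly that gap. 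For termination your argument coincides with the paper's. In short, the paper buys brevity by implicitly reading the ``identical client requests'' part of Theorem~\ref{thm:learning_z} as an unconditional safety claim; your decomposition makes that reading explicit and is the more rigorous of the two.
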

\begin{proof}
Both \emph{termination} and \emph{non-divergence} are direct corollaries
of Theorem~\ref{thm:learning_z}.
\end{proof}

\subsection{Final Remarks}
Until now we have presented the design of \GeoBFT{} using a strict notion of rounds. 
Only during the last step of each round of \GeoBFT{}, which orders and executes client requests (Section~\ref{ss:order_exec}), this strict notion of rounds is required.
All other steps can be performed out-of-order. 
For example, local replication and inter-cluster sharing of client requests for future rounds 
can happen in parallel with ordering and execution of client requests. 
In specific, the replicas of a cluster $\Cluster_i$, $1 \leq i \leq z$ 
can replicate the requests for round $\rn+2$, 
share the requests for round $\rn+1$ with other clusters, and 
execute requests for round $\rn$ in parallel.
Hence, \GeoBFT{} needs minimal synchronization between clusters. 

Additionally, we do not require that every cluster always has client requests available. 
When a cluster $\Cluster$ does not have client requests to execute in a round, 
the primary $\PrimaryC{\Cluster}$ can propose a no-op-request. The primary $\PrimaryC{\Cluster}$ can detect the need for such a no-op request in round $\rn$ when it starts receiving client requests for round $\rn$ from other clusters. As with all requests, also such no-op requests requires commit certificates obtained via local replication. 

To prevent that $\PrimaryC{\Cluster}$ can indefinitely ignore requests from some or all clients in $\Clients{\Cluster}$, we rely on standard \PBFT{} techniques to detect and resolve such attacks during local replication. These techniques effectively allow clients in $\Clients{\Cluster}$ to force the cluster to process its request, ruling out the ability of faulty primaries to indefinitely propose no-op requests when client requests are available.

Furthermore, to simplify presentation, we have assumed that every cluster has exactly the same size and that the set of replicas never change. These assumptions can be lifted, however. \GeoBFT{} can easily be extended to also work with clusters of varying size, this only requires minor tweaks on the remote view-change protocol of Figure~\ref{fig:rvc} (the conditions at Line~\ref{fig:rvc:trigger} rely on the cluster sizes, see Proposition~\ref{prop:remote_vc}). To deal with faulty replicas that eventually recover, we can rely on the same techniques as \PBFT{}~\cite{pbft,pbftj}. Full dynamic membership, in which replicas can join and leave \GeoBFT{} via some vetted automatic procedure, is a challenge for any permissioned blockchain and remains an open problem for future work~\cite{groupcom,gmp}.

\section[Implementation in ResilientDB]{Implementation in R\MakeLowercase{esilient}DB}

\GeoBFT{} is designed to enable geo-scale deployment of a permissioned blockchain. 
Next, we present our \ResilientDB{} fabric~\cite{resilientdb}, a permissioned blockchain fabric that
uses \GeoBFT{} to provide such a geo-scale aware high-performance permissioned blockchain. 
\ResilientDB{} is especially tuned to enterprise-level blockchains in which 
\begin{enumerate*}[label=(\roman*)]
\item replicas can be dispersed over a wide area network; 
\item links connecting replicas at large distances have low bandwidth; 
\item replicas are untrusted but known; and  
\item applications require high throughput and low latency.
\end{enumerate*}
These four properties are directly motivated by practical properties of geo-scale 
deployed distributed systems (see Table~\ref{tbl:geoscale_cost} in Section~\ref{sec:intro}). In Figure~\ref{fig:rdb_architect}, we present the architecture of \ResilientDB{}. 

\begin{figure}[t]
    \centering
    \includegraphics[width=\columnwidth]{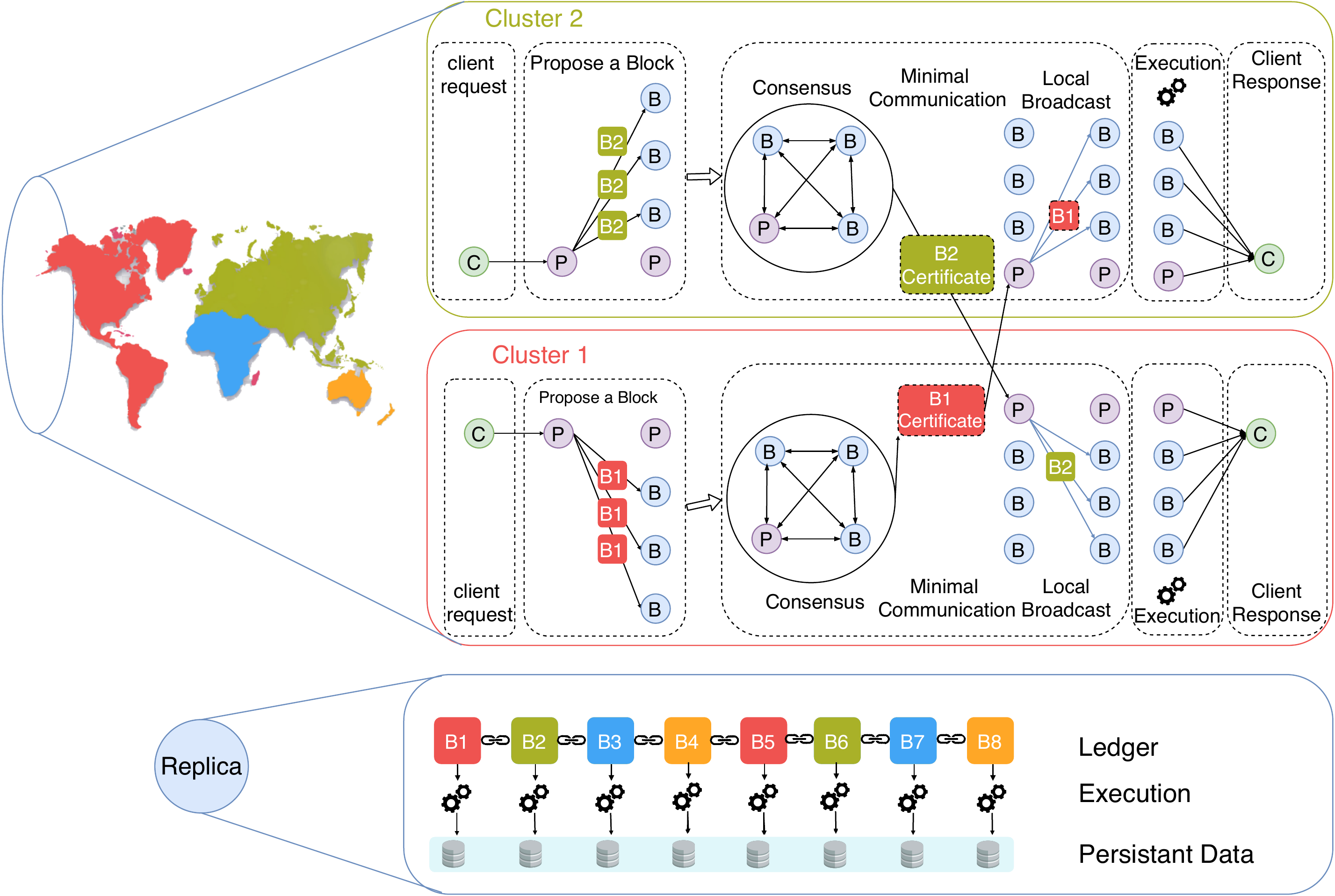}
    \caption{Architecture of our \ResilientDB{} Fabric.}
    \label{fig:rdb_architect}
    \vspace{-4mm}
\end{figure}

At the core of \ResilientDB{} is the ordering of client requests and appending them to the \emph{ledger}---the 
immutable append-only blockchain representing the ordered sequence of accepted client requests. 
The order of each client request will be determined via \GeoBFT{}, which we described in Section~\ref{sec:geobft}. 
Next, we focus on how the ledger is implemented and on other practical details that enable geo-scale performance.

\paragraph*{The ledger (blockchain)}
The key purpose of any blockchain fabric is the maintenance of the ledger: 
the immutable append-only blockchain representing the ordered sequence of client requests accepted. 
In \ResilientDB{}, the $i$-th block in the ledger consists of the $i$-th executed client request. 
Recall that in each round $\rn$ of \GeoBFT{}, each replica executes $\z$ requests, 
each belonging to a different cluster $\Cluster_i$, $1 \leq i \leq \z$.
Hence, in each round $\rn$, each replica creates $\z$ blocks in the order of execution of the $\z$ requests.
To assure immutability of each block, the block not only consists of the client request, 
but also contains a commit certificate. 
This prevents tampering of any block, as only a single commit certificate can be made per cluster 
per \GeoBFT{} round (Lemma~\ref{lem:pbft}). 
As \ResilientDB{} is designed to be a fully-replicated blockchain, each replica independently maintains 
a full copy of the ledger. 
The immutable structure of the ledger also helps when recovering replicas: 
tampering of its ledger by any replica can easily be detected. 
Hence, a recovering replica can simply read the ledger of any replica it chooses and directly verify whether the ledger can be trusted (is not tampered with).

\paragraph*{Cryptography}
The implementation of \GeoBFT{} and the ledger requires the availability of strong  cryptographic primitives, e.g., to provide digital signatures and authenticated communication (see Section~\ref{ss:prelim}). To do so, \ResilientDB{} provides its replicas and clients access to NIST-recommended strong cryptographic primitives~\cite{nist}. In specific, we use ED25519-based digital signatures to sign our messages and we use AES-CMAC message authentication codes to implement authenticated communication~\cite{cryptobook}. Further, we employ SHA256 to generate collision-resistant message digests.

\paragraph*{Pipelined consensus}
From our experience designing and implementing Byzantine consensus protocols, we know that throughput can be limited by waiting (e.g., due to message latencies) or by computational costs (e.g., costs of signing and verifying messages). To address both issues simultaneously, \ResilientDB{} provides a multi-threaded pipelined architecture for the implementation of consensus protocols. In Figure~\ref{fig:geobft_pipe}, we have illustrated how \GeoBFT{} is implemented in this multi-threaded pipelined architecture.

\begin{figure*}[t]
    \vspace{-1mm}
    \centering
    \includegraphics[width=0.44\textwidth]{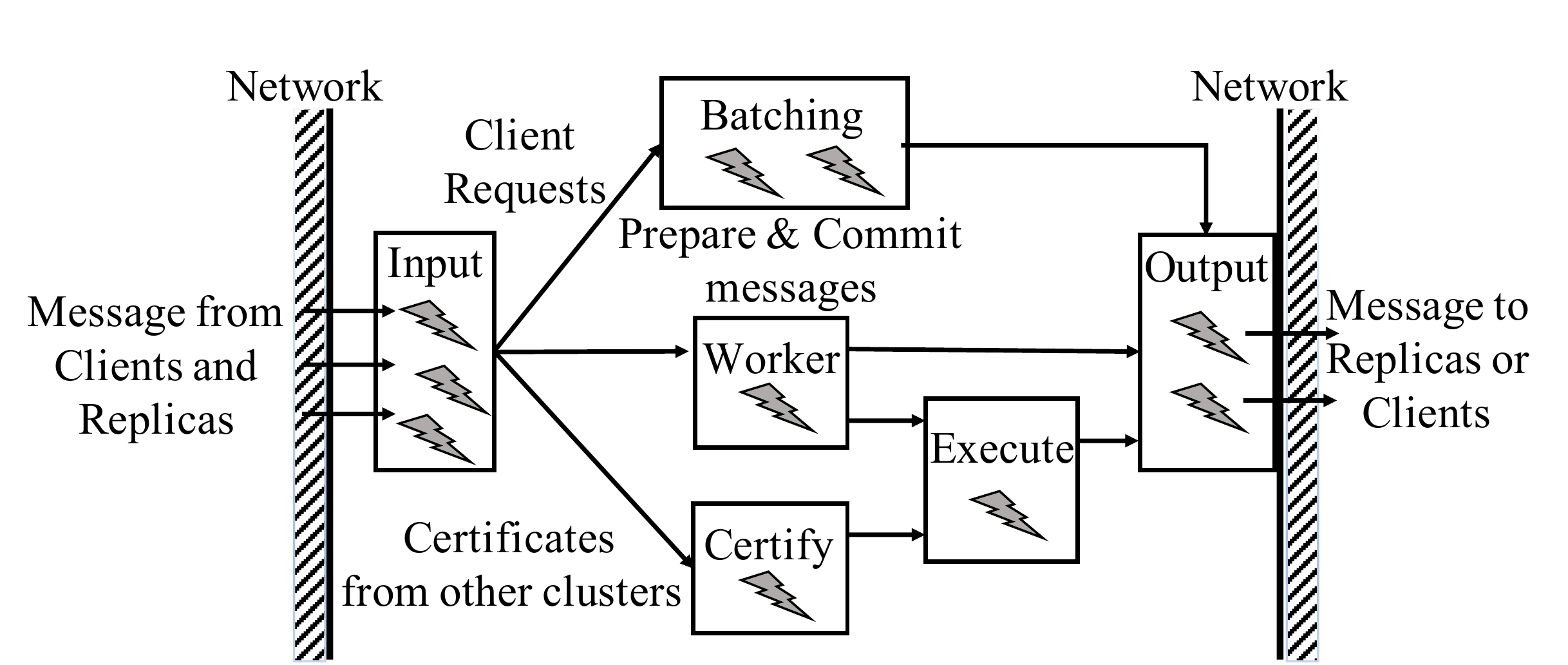}
    \hspace{1.5cm}
    \includegraphics[width=0.44\textwidth]{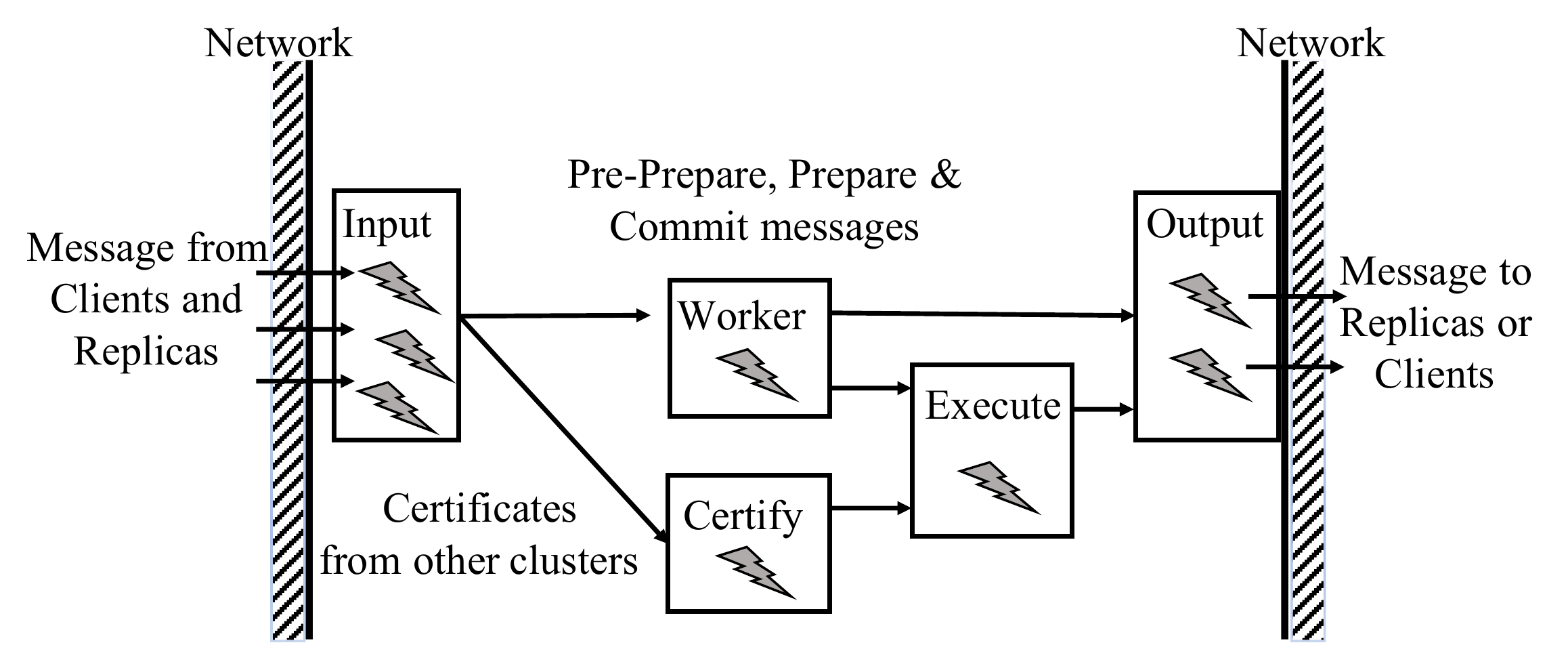}
    \vspace{-2mm}
    \caption{The multi-threaded implementation of \GeoBFT{} in \ResilientDB{}. \emph{Left}, the implementation for local primaries. \emph{Right}, the implementation for other replicas.
    }\label{fig:geobft_pipe}
\end{figure*}

With each replica, we associate a set of \emph{input threads} that receive messages from the network.  The primary has one input thread dedicated to accepting client requests, which this input thread places on the batch queue for further processing. All replicas have two input threads for processing all other messages (e.g., those related to local replication and global sharing). Each replica has two \emph{output threads} for sending messages. The ordering (consensus) and execution of client requests is done by the \emph{worker}, \emph{execute}, and \emph{certify} threads. 
We provide details on these threads next.

\paragraph*{Request batching}
In the design of \ResilientDB{} and \GeoBFT{}, 
we support the grouping of client requests in \emph{batches}. 
Clients can group their requests in batches and send these batches to their local cluster.
Furthermore, local primaries can group requests of different clients into a single batch. 
Each batch is then processed by the consensus protocol (\GeoBFT{}) as a single request, 
thereby sharing the cost associated with reaching consensus among each of the requests in the batch. Such request batching is a common practice in high-performance blockchain systems and can be applied to a wide range of workloads (e.g., processing financial transactions).

When an input thread at a local primary $\PrimaryC{\Cluster}$ receives a batch of client requests from a client, the thread assigns it a linearly increasing number and places the batch in the \emph{batch queue}. Via this assigned number, $\PrimaryC{\Cluster}$ already decided the order in which this batch needs to be processed (and the following consensus steps are only necessary to communicate this order reliably with all other replicas). Next, the \emph{batching thread} at $\PrimaryC{\Cluster}$ takes request batches from the batch queue and initiates local replication. 
To do so, the batching thread initializes the data-structures used by the local replication step, 
creates a valid \Name{pre-prepare} message for the batch (see Section~\ref{ss:local_rep}), and puts this message on the output queue. 
The output threads dequeue these messages from the output queue and send them to all intended recipients.

When a input thread at replica $\Replica \in \Cluster$ receives a message as 
part of local replication (e.g., \Name{pre-prepare}, \Name{prepare}, \Name{commit}, 
see Section~\ref{ss:local_rep}), then it places this message in the \emph{work queue}. 
Next, the \emph{worker thread} processes any incoming messages placed on the work queue 
by performing the steps of \PBFT{}, the local replication protocol. 
At the end of the local replication step, when the worked thread has received 
$\n-\f$ identical \Name{commit} messages for a batch, it notifies both the 
\emph{execution thread} and \emph{certify thread}.
The certify thread creates a commit certificate corresponding to this notification 
and places this commit certificate in the output queue, initiating inter-cluster sharing
(see Section~\ref{ss:intershare}).

When a input thread at replica $\Replica \in \Cluster$ receives a message as 
part of global sharing (a commit certificate), then it places this message in the \emph{certify queue}. 
Next, the certify thread processes any incoming messages placed on the certify queue 
by performing the steps of the global sharing protocol. 
If the certify thread has processed commit certificates for batches in round $\rn$ from all clusters, 
then it notifies the execution thread that round $\rn$ can be ordered and executed.

The execute thread at replica $\Replica \in \Cluster$ waits until it receives notification for all commit certificates associated with the next round. These commit certificates correspond to all the client batches that need to be ordered and executed. When the notifications are received, the execute thread simply performs the steps described in Section~\ref{ss:order_exec}.

\paragraph*{Other protocols} 
\ResilientDB{} also provides implementations of \emph{four} other state-of-the-art consensus protocols, namely, \PBFT{}, \ZZ{}, \HS{}, and \STW{} (see Section~\ref{ss:challenges}). Each of these protocols use the multi-threaded pipelined architecture of \ResilientDB{} and are structured similar to the design of \GeoBFT{}. Next, we provide some details on each of these protocols.

We already covered the working of \PBFT{} in Section~\ref{ss:local_rep}.  Next, \ZZ{} is designed with the most optimal case in mind: it requires non-faulty clients and depends on clients to aid in the recovery of \emph{any} failures. To do so, clients in \ZZ{} require identical responses from all $\n$ replicas. If these are not received, the client initiates recovery of any requests with sufficient $\n-\f$ responses by broadcasting certificates of these requests. This will greatly reduce performance when \emph{any} replicas are faulty.  In \ResilientDB{}, the certify thread at each replica processes these recovery certificates.

\HS{} is designed to reduce communication of \PBFT{}. To do so, \HS{} uses threshold signatures to combine $\n-\f$ message signatures into a single signature. As there is no readily available implementation for threshold signatures available for the Crypto++ library, we skip the construction and verification of threshold signatures in our implementation of \HS{}. Moreover, we allow each replica of \HS{} to act as a primary in parallel without requiring the usage of pacemaker-based synchronization~\cite{hotstuff}. Both decisions give our \HS{} implementation a substantial performance advantage in practice.

Finally, we also implemented \STW{}. This protocol groups replicas into clusters, similar to \GeoBFT{}. Different from \GeoBFT{}, \STW{} designates one of these clusters as the \emph{primary cluster}, which coordinates all operations. To reduce inter-cluster communication, \STW{} uses threshold signatures. As with \HS{}, we omitted these threshold signatures in our implementation.

\section{Evaluation}\label{sec:eval}

To showcase the practical value of \GeoBFT{}, we now use our \ResilientDB{} fabric 
to evaluate \GeoBFT{} against four other popular state-of-the-art consensus protocols 
(\PBFT{}, \ZZ, \HS{}, and \STW{}). 
We deploy \ResilientDB{} on the Google Cloud using \texttt{N1} machines that have $8$-core Intel Skylake CPUs and $\SI{16}{\giga\byte}$ of main memory. 
Additionally, we deploy $\SI{160}{\kilo\nothing}$ clients on eight $4$-core machines having 
$\SI{16}{\giga\byte}$ of main memory.
We equally distribute the clients across all the regions used in each experiment.

In each experiment, the workload is provided by the \emph{Yahoo Cloud Serving Benchmark}  (YCSB)~\cite{ycsb}. Each client transaction queries a YCSB table with an active set of $\SI{600}{\kilo\nothing}$ records. For our evaluation, we use \emph{write queries}, as those are typically more costly than read-only queries. Prior to the experiments, each replica is initialized with an identical copy of the YCSB table. The client transactions generated by YCSB follow a uniform Zipfian distribution. Clients and replicas can batch transactions to reduce the cost of consensus. In our experiments, we use a \emph{batch size} of $100$ requests per batch (unless stated otherwise). 

With a batch size of $100$, the messages have sizes of $\SI{5.4}{\kilo\byte}$ (\Name{preprepare}), $\SI{6.4}{\kilo\byte}$ (commit certificates containing seven \Name{commit} messages and a \Name{preprepare} message), $\SI{1.5}{\kilo\byte}$ (client responses), and $\SI{250}{\byte}$ (other messages).
The size of a commit certificate is largely dependent on the size of 
the \Name{preprepare} message, while the total size of the accompanying \Name{commit} 
messages is small. Hence, the inter-cluster sharing of these certificates is not a bottleneck for \GeoBFT{}: existing \BFT{} 
protocols send \Name{preprepare} messages to all replicas irrespective of their region. 
Further, if the size of \Name{commit} messages starts dominating, then 
threshold signatures can be adopted to reduce their cost~\cite{rsasign}.

To perform geo-scale experiments, we deploy replicas across \emph{six} 
different regions, namely Oregon, Iowa, Montreal, Belgium, Taiwan, and Sydney. 
In Table~\ref{tbl:geoscale_cost}, we present our measurements on the inter-region network latency and bandwidth. 
We run each experiment for $\SI{180}{\second}$: first, we allow the system to warm-up for $\SI{60}{\second}$, after which we collect measurement results for the next $\SI{120}{\second}$. We average the results of our experiments over three runs. 

For \PBFT{} and \ZZ{}, centralized protocols in which a single primary replica coordinates consensus, 
we placed the primary in Oregon, as this region has the highest bandwidth to all other regions 
(see Table~\ref{tbl:geoscale_cost}).
For \HS{}, our implementation permits all replicas to act as both primary and non-primary at the same time.
For both \GeoBFT{} and \STW{}, we group replicas in a single region into a single cluster. 
In each of these protocols, each cluster has its own local primary.
Finally, for \STW{}, a centralized protocol in which the primary cluster coordinates the consensus, 
we placed the primary cluster in Oregon.
We focus our evaluation on answering the following four research questions:
\begin{enumerate}[wide,nosep,label=(\arabic*)]
\item What is the impact of geo-scale deployment of replicas in distant clusters on the performance of \GeoBFT{}, as compared to other consensus protocols?
\item What is the impact of the size of local clusters (relative to the number of clusters) on the performance of \GeoBFT{}, as compared to other consensus protocols?
\item What is the impact of failures on the performance of \GeoBFT{}, as compared to other consensus protocols?
\item Finally, what is the impact of request batching on the performance of \GeoBFT{}, as compared to other consensus protocols, and under which batch sizes can \GeoBFT{} already provide good throughput?
\end{enumerate}

\pgfplotstableread{
clusters	pbft	geobft	hs	stw	zz
1	34342	9900	44876	17641	37618
2	33641	38322	43081	19618	35357
3	30174	41769	40272	19038	34503
4	17449	50143	40499	18717	18008
5	17362	54881	43089	18661	18070
6	14584	54936	42517	15556	15180
}\dataTPUTfnCLUSTER

\pgfplotstableread{
clusters	pbft	geobft	hs	stw	zz
1	0.666	0.694	5.317	0.960	0.437
2	0.512	0.840	6.077	0.891	0.454
3	0.541	0.645	5.909	0.867	0.550
4	1.280	0.735	7.261	1.192	1.242
5	1.807	0.880	7.124	1.123	1.649
6	2.066	0.906	7.607	1.617	1.994
}\dataLATfnCLUSTER

\pgfplotstableread{
replicas	pbft	geobft	hs	stw	zz
4	17644	100922	50847	18815	18004
7	17241	79980	48842	18590	17842
10	17403	62704	42034	18388	17991
12	17399	53992	40039	19022	17993
15	17449	50143	40499	18717	18008
}\dataTPUTfnREPLICAS

\pgfplotstableread{
replicas	pbft	geobft	hs	stw	zz
4	0.633	0.509	2.906	0.987	0.617
7	0.649	0.647	3.034	0.999	0.623
10	0.644	0.644	3.532	1.013	0.827
12	0.644	0.612	5.709	0.977	0.619
15	1.28	0.944	7.261	1.192	1.242
}\dataLATfnREPLICAS

\pgfplotstableread{
replicas	pbft	geobft	hs	stw	zz
4	17800	100393	45338	18672	261
7	17186	79159	44978	18297	301
10	16657	64064	43504	18978	261
12	16717	55861	43414	17766	301
}\dataTPUTfnREPLICASwFAIL

\pgfplotstableread{
replicas	pbft	geobft	hs	stw	zz
4	16675	100498	45081	17334	393
7	16758	76724	43280	17603	393
10	16657	60802	42407	18438	393
12	16717	52916	37190	17766	393
}\dataTPUTfnREPLICASwFFAIL

\pgfplotstableread{
replicas	pbft	geobft
4	17509	89302
7	17544	75082
10	17666	65006
12	15986	53621
}\dataTPUTfnREPLICASwVC

\pgfplotstableread{
batchsize	pbft	geobft	hs	stw	zz
10	9511	12327	8104	7866	12243
50	16251	50608	31433	17874	17342
100	17241	79980	48842	18590	17842
200	17916	99763	69259	19118	18579
300	18755	112319	79183	19063	18755
}\dataTPUTfnBATCH

\begin{figure*}[t]
    \centering
    \scalebox{0.7}{\ref*{legend:all_line}}\\[10pt]
    \begin{minipage}{0.49\textwidth}
        \centering
        \begin{tikzpicture}[geobftplot]
            \begin{axis}[title={Throughput},
                         xlabel={Number of clusters},
                         ylabel={\smash{Throughput (\si{\text{txn}\per\second})}},
                         legend to name=legend:all_line,
                         legend columns=-1]
                \addplot table[x={clusters},y={geobft}] {\dataTPUTfnCLUSTER};
                \addplot table[x={clusters},y={pbft}] {\dataTPUTfnCLUSTER};
                \addplot table[x={clusters},y={zz}] {\dataTPUTfnCLUSTER};
                \addplot table[x={clusters},y={hs}] {\dataTPUTfnCLUSTER};
                \addplot table[x={clusters},y={stw}] {\dataTPUTfnCLUSTER};
                \legend{\GeoBFT{},\PBFT{},\ZZ{},\HS{},\STW{}};
            \end{axis}
        \end{tikzpicture}\quad
        \begin{tikzpicture}[geobftplot]
            \begin{axis}[title={Latency},
                         xlabel={Number of clusters},
                         ylabel={\smash{Latency (\si{\second})}}]
                \addplot table[x={clusters},y={geobft}] {\dataLATfnCLUSTER};
                \addplot table[x={clusters},y={pbft}] {\dataLATfnCLUSTER};
                \addplot table[x={clusters},y={zz}] {\dataLATfnCLUSTER};
                \addplot table[x={clusters},y={hs}] {\dataLATfnCLUSTER};
                \addplot table[x={clusters},y={stw}] {\dataLATfnCLUSTER};
            \end{axis}
        \end{tikzpicture}
    \end{minipage}
    \hfill
    \begin{minipage}{0.49\textwidth}
        \centering
        \begin{tikzpicture}[geobftplot]
            \begin{axis}[title={Throughput},
                         xlabel={Number of replicas (per cluster)},
                         ylabel={\smash{Throughput (\si{\text{txn}\per\second})}},
                         xtick={4,7,10,12,15},
                         scaled y ticks=base 10:-4]
                \addplot table[x={replicas},y={geobft}] {\dataTPUTfnREPLICAS};
                \addplot table[x={replicas},y={pbft}] {\dataTPUTfnREPLICAS};
                \addplot table[x={replicas},y={zz}] {\dataTPUTfnREPLICAS};
                \addplot table[x={replicas},y={hs}] {\dataTPUTfnREPLICAS};
                \addplot table[x={replicas},y={stw}] {\dataTPUTfnREPLICAS};
            \end{axis}
        \end{tikzpicture}\quad
        \begin{tikzpicture}[geobftplot]
            \begin{axis}[title={Latency},
                         xlabel={Number of replicas (per cluster)},
                         ylabel={\smash{Latency (\si{\second})}},
                         xtick={4,7,10,12,15}]
                \addplot table[x={replicas},y={geobft}] {\dataLATfnREPLICAS};
                \addplot table[x={replicas},y={pbft}] {\dataLATfnREPLICAS};
                \addplot table[x={replicas},y={zz}] {\dataLATfnREPLICAS};
                \addplot table[x={replicas},y={hs}] {\dataLATfnREPLICAS};
                \addplot table[x={replicas},y={stw}] {\dataLATfnREPLICAS};
            \end{axis}
        \end{tikzpicture}
    \end{minipage}
    \begin{minipage}[t]{0.49\textwidth}
        \caption{Throughput and latency as a function of the number of clusters; $\z\n = 60$ replicas.}\label{fig:fn_clust}
    \end{minipage}
    \hfill
    \begin{minipage}[t]{0.49\textwidth}
        \caption{Throughput and latency as a function of the number of replicas per cluster; $\z = 4$.}\label{fig:fn_repl}
    \end{minipage}\\[10pt]
    \begin{minipage}{0.7425\textwidth}
        \centering
        \begin{tikzpicture}[geobftplot]
            \begin{axis}[title={Throughput (one failure)},
                         xlabel={Number of replicas (per cluster)},
                         ylabel={\smash{Throughput (\si{\text{txn}\per\second})}},
                         xtick={4,7,10,12},
                         scaled y ticks=base 10:-4]
                \addplot table[x={replicas},y={geobft}] {\dataTPUTfnREPLICASwFAIL};
                \addplot table[x={replicas},y={pbft}] {\dataTPUTfnREPLICASwFAIL};
                \addplot table[x={replicas},y={zz}] {\dataTPUTfnREPLICASwFAIL};
                \addplot table[x={replicas},y={hs}] {\dataTPUTfnREPLICASwFAIL};
                \addplot table[x={replicas},y={stw}] {\dataTPUTfnREPLICASwFAIL};
            \end{axis}
        \end{tikzpicture}\quad
        \begin{tikzpicture}[geobftplot]
            \begin{axis}[title={Throughput ($\f$ failures)},
                         xlabel={Number of replicas (per cluster)},
                         ylabel={\smash{Throughput (\si{\text{txn}\per\second})}},
                         xtick={4,7,10,12},
                         scaled y ticks=base 10:-4]
                \addplot table[x={replicas},y={geobft}] {\dataTPUTfnREPLICASwFFAIL};
                \addplot table[x={replicas},y={pbft}] {\dataTPUTfnREPLICASwFFAIL};
                \addplot table[x={replicas},y={zz}] {\dataTPUTfnREPLICASwFFAIL};
                \addplot table[x={replicas},y={hs}] {\dataTPUTfnREPLICASwFFAIL};
                \addplot table[x={replicas},y={stw}] {\dataTPUTfnREPLICASwFFAIL};
            \end{axis}
        \end{tikzpicture}\quad
        \begin{tikzpicture}[geobftplot]
            \begin{axis}[title={Throughput (primary failure)},
                         xlabel={Number of replicas (per cluster)},
                         ylabel={\smash{Throughput (\si{\text{txn}\per\second})}},
                         xtick={4,7,10,12}]
                \addplot table[x={replicas},y={geobft}] {\dataTPUTfnREPLICASwVC};
                \addplot table[x={replicas},y={pbft}] {\dataTPUTfnREPLICASwVC};
            \end{axis}
        \end{tikzpicture}
    \end{minipage}
    \hfill
    \begin{minipage}{0.2375\textwidth}
        \centering
        \begin{tikzpicture}[geobftplot]
            \begin{axis}[title={Throughput},
                         xlabel={Transactions per batch (batch size)},
                         ylabel={\smash{Throughput (\si{\text{txn}\per\second})}},
                         xtick={10,50,100,200,300},
                         scaled y ticks=base 10:-4]
                \addplot table[x={batchsize},y={geobft}] {\dataTPUTfnBATCH};
                \addplot table[x={batchsize},y={pbft}] {\dataTPUTfnBATCH};
                \addplot table[x={batchsize},y={zz}] {\dataTPUTfnBATCH};
                \addplot table[x={batchsize},y={hs}] {\dataTPUTfnBATCH};
                \addplot table[x={batchsize},y={stw}] {\dataTPUTfnBATCH};
            \end{axis}
        \end{tikzpicture}
    \end{minipage}
    \begin{minipage}[t]{0.7425\textwidth}
        \caption{Throughput as a function of the number of replicas per cluster; $\z = 4$. \emph{Left}, throughput with one non-primary failure. \emph{Middle}, throughput with $\f$ non-primary failures. \emph{Right}, throughput with a single primary failure.}\label{fig:fn_fail}
    \end{minipage}
    \hfill
    \begin{minipage}[t]{0.2375\textwidth}
        \caption{Throughput as a function of the batch size; $\z = 4$ and $\n = 7$.}\label{fig:fn_batch}
    \end{minipage}
\end{figure*}

\subsection{Impact of Geo-Scale deployment}

First, we determine the impact of geo-scale deployment of replicas in distant regions on the performance of \GeoBFT{} and other consensus protocols. To do so, we measure the throughput and latency attained by \ResilientDB{} as a function of the number of regions, which we vary between $1$ and $6$. 
We use $60$ replicas evenly distributed over the regions, and we select regions in the order Oregon, Iowa, Montreal, Belgium, Taiwan, and Sydney. E.g., if we have four regions, then each region has $15$ replicas, and 
we have these replicas in Oregon, Iowa, Montreal, and Belgium. 
The results of our measurements can be found in Figure~\ref{fig:fn_clust}.

From the measurements, we see that \STW{} is unable to benefit from its topological knowledge of the network: in practice, we see that the high computational costs and the centralized design of \STW{} prevent high throughput in all cases. 
Both \PBFT{} and \ZZ{} perform better than \STW{}, especially when ran in a few well-connected regions (e.g., only the North-American regions). 
The performance of these protocols falls when inter-cluster communication becomes a bottleneck, however (e.g., when regions are spread across continents). 
\HS{}, which is designed to reduce communication compared to \PBFT{}, has reasonable  throughput in a geo-scale deployment, and sees only a small drop in throughput when regions are added.  
The high computational costs of the protocol prevent it from reaching high throughput in any setting, however. 
Additionally, \HS{} has very high latencies due to its $4$-phase design. 
As evident from Figure~\ref{tbl:compare}, \HS{} clients face severe delay in receiving a response for their client requests.

Finally, the results clearly show that \GeoBFT{} scales well with an increase in regions. 
When running at a single cluster, the added overhead of \GeoBFT{} (as compared to \PBFT{}) is high, 
which limits its throughput in this case. 
Fortunately, \GeoBFT{} is the only protocol that actively benefits from adding regions: 
adding regions implies adding clusters, which \GeoBFT{} uses to increase parallelism of consensus and 
decrease centralized communication. 
This added parallelism helps offset the costs of inter-cluster communication, 
even when remote regions are added. 
Similarly, adding remote regions only incurs a low latency on \GeoBFT{}.
Recall that \GeoBFT{} sends only $\f+1$ messages between any two clusters. Hence, a total of $\BigO{\z\f}$ inter-cluster messages are sent, which is much less than the number of messages communicated across clusters by other protocols (see Figure~\ref{tbl:compare}). As the cost of communication between remote clusters is high (see Figure~\ref{tbl:geoscale_cost}), this explains why other protocols have lower throughput and higher latencies than \GeoBFT{}. Indeed, when operating on several regions, \GeoBFT{} is able to outperform \PBFT{} by a factor of up-to-$3.1\times$ and outperform \HS{} by a factor of up-to-$1.3\times$.

\subsection{Impact of Local Cluster Size}

Next, we determine the impact of the number of replicas per region on the performance of \GeoBFT{} and other consensus protocols. To do so, we measure the throughput and latency attained by \ResilientDB{} as a function of the number of replicas per region, which we vary between $4$ and $15$. We have replicas in four regions (Oregon, Iowa, Montreal, and Belgium). The results of our measurements can be found in Figure~\ref{fig:fn_repl}.

The measurements show that increasing the number of replicas only has minimal negative influence on the throughput and latency of \PBFT{}, \ZZ{}, and \STW{}. As seen in the previous Section, the inter-cluster communication cost for the primary to contact individual replicas in other regions (and continents) is the main bottleneck. Consequently, the number of replicas used only has minimal influence. For \HS{}, which does not have such a bottleneck, adding replicas does affect throughput and---especially---latency, this due to the strong dependence between latency and the number of replicas in the design of \HS{}. 

The design of \GeoBFT{} is particularly tuned toward a large number of regions (clusters), and not toward a large number of replicas per region. We observe that increasing the replicas per cluster also allows each cluster to tolerate more failures (increasing $\f$). Due to this, the performance drop-off for \GeoBFT{} when increasing the replicas per region is twofold: first, the size of the certificates  exchanged between clusters is a function of $\f$; second, each cluster sends their certificates to $\f  + 1$ replicas in each other cluster.  Still, the parallelism incurred by running in four clusters allows \GeoBFT{} to outperform all other protocols, even when scaling up to fifteen replicas per region, in which case it is still $2.9\times$ faster than \PBFT{} and $1.2\times$ faster than \HS{}.

\subsection{Impact of Failures}

In our third experiment, we determine the impact of replica failures on the performance of \GeoBFT{} and other consensus protocols. To do so, we measure the throughput attained by \ResilientDB{} as a function of the number of replicas, which we vary between $4$ and $12$. We perform the measurements under three failure scenarios: a single non-primary replica failure, up to $\f$ simultaneous non-primary replica failures per region, and a single primary failure. As in the previous experiment, we have replicas in four regions (Oregon, Iowa, Montreal, and Belgium). The results of our measurements can be found in Figure~\ref{fig:fn_fail}.

\paragraph*{Single non-primary replica failure}
The measurements for this case show that the failure of a single non-primary replica has only a small impact on the throughput of most protocols. The only exception being \ZZ{}, for which the throughput plummets to zero, as \ZZ{} is optimized for the optimal non-failure case. The inability of \ZZ{} to effectively operate under any failures is consistent with prior analysis of the protocol~\cite{zzuns,aadvark}.

\paragraph*{$\f$ non-primary replica failures per cluster} In this experiment, we measure the performance of \GeoBFT{} in the worst case scenario it is designed for: the simultaneous failure of $\f$ replicas in each cluster ($\f\z$ replicas in total). This is also the worst case \STW{} can deal with, and is close to the worst case the other protocols can deal with (see Remark~\ref{rem:model}).

The measurements show that the failures have a moderate impact on the performance of all protocols (except for \ZZ{} which, as in the single-failure case, sees its throughput plummet to zero). The reduction in throughput is a direct consequence of the inner working of the consensus protocols. Consider, e.g., \GeoBFT{}. In \GeoBFT{}, replicas in each cluster first choose a local client request and replicate this request locally using \PBFT{} (see Section~\ref{ss:local_rep}). In each such local replication step, each replica will have two phases in which it needs to receive $\n - \f$ identical messages before proceeding to the next phase (namely, \Name{prepare} and \Name{commit} messages). If there are no failures, then each replica only must wait for the $\n-\f$ fastest messages and can proceed to the next phase as soon as these messages are received (ignoring any delayed messages). However, if there are $\f$ failures, then each replica must wait for all messages of the remaining non-failed replicas to arrive before proceeding, including the slowest arriving messages. Consequently, the impact of temporary disturbances causing random message delays at individual replicas increases with the number of failed replicas, which negatively impacts performance. Similar arguments also hold for \PBFT{}, \STW{}, and \HS{}.

\paragraph*{Single primary failure} In this experiment, we measure the performance of \GeoBFT{} if a single primary fails (in one of the four regions). We compare the performance of \GeoBFT{} with \PBFT{} under failure of a single primary,  which will cause primary replacement via a view-change. 
For \PBFT{}, we require checkpoints to be generated and transmitted after every $600$ client transactions.
Further, we perform the primary failure after $900$ client transactions have been ordered.

For \GeoBFT{}, we fail the primary of the cluster in Oregon once each cluster has ordered $900$ transactions.
Similarly, each cluster exchanges checkpoints periodically, after locally replicating every $600$ transactions.
In this experiment, we have excluded \ZZ{}, as it already fails to deal with non-primary failures, \HS{}, as it utilizes rotating primaries and does not have a notion of a fixed primary, and \STW{}, as it does not provide a readily-usable and complete view-change implementation. As expected, the measurements show that recovery from failure incurs a small reduction in overall throughput in both protocols, as both protocols are able to recover to normal-case operations after failure.

\subsection{Impact of Request Batching}

We now determine the impact of the batch size---the number of client transactions processed by the consensus protocols in a single consensus decision---on the performance of various consensus protocols. To do so, we measure the throughput attained by \ResilientDB{} as a function of the batch size, which we vary between $10$ and $300$. For this experiment, we have replicas in four regions (Oregon, Iowa, Montreal, and Belgium), and each region has seven replicas. The results of our measurements can be found in Figure~\ref{fig:fn_batch}.

The measurements show a clear distinction between, on the one hand, \PBFT{}, \ZZ{}, and \STW{}, and, on the other hand, \GeoBFT{} and \HS{}. Note that in \PBFT{}, \ZZ{}, and \STW{} a single primary residing in a single region coordinates all consensus. This highly centralized communication limits throughput, as it is bottlenecked by the bandwidth of the single primary. 
\GeoBFT{}---which has primaries in each region---and \HS{}---which rotates primaries---both distribute consensus 
over several replicas in several regions, removing bottlenecks due to the bandwidth of any single replica. 
Hence, these protocols have sufficient bandwidth to support larger batch sizes (and increase throughput). 
Due to this, \GeoBFT{} is able to outperform \PBFT{} by up-to-$6.0\times$. 
Additionally, as the design of \GeoBFT{} is optimized to minimize global bandwidth usage, \GeoBFT{} is even able to outperform \HS{} by up-to-$1.6\times$.

\section{Related Work}

Resilient systems and consensus protocols have been widely studied by the distributed computing community (e.g.,~\cite{easyc,easyc-extend,icdt_delayed,pdbook,specpaxos,quecc,tbook,distalgo,distbook}). 
Here, we restrict ourselves to works addressing some of the challenges addressed by \GeoBFT{}: consensus protocols supporting high-performance or geo-scale aware resilient system designs.

\paragraph*{Traditional \BFT{} consensus} 
The consensus problem and related problems such as Byzantine Agreement and Interactive Consistency have been studied in detail since the late seventies~\cite{cryptorounds,netbound,dolevstrong,byzgen,dolevbound,interbound,flp,paxos,generals,generalmw,consbound,possibleasync,distalgo}. The introduction of the \PBFT{}-powered \emph{BFS}---a fault-tolerant version of the networked file system~\cite{nfs}---by Castro et al.~\cite{pbft,pbftj} marked the first practical high-performance system using consensus. Since the introduction of \PBFT{}, many consensus protocols have been proposed that improve on aspects of \PBFT{}, e.g, \ZZ{}, \textsc{PoE}, and \HS{}, as discussed in the Introduction. To further improve on the performance of \PBFT{}, some consensus protocols consider providing less failure resilience~\cite{qubft,qclients,hq,quorumbyzclients,byzq,phalanx}, or rely on trusted components~\cite{hybster,less-replica-2,less-replica-1,cheapbft,ebawa,less-replica-3}.  
Some recent protocols propose the notion of multiple parallel primaries~\cite{disc_mbft,multibft-system}. 
Although such designs have partial decentralization, these protocols are still not geo-scale aware.
None of these protocols are fully geo-scale aware, however, making them unsuitable for the setting we envision for \GeoBFT{}.

Protocols such as \STW{}, \Name{Blink},  and \Name{Menicus} improve on \PBFT{} by partly optimizing for geo-scale aware deployments~\cite{blink,steward,menicus,geo-uncivil}. In Section~\ref{sec:eval}, we already showed that the design of \STW{}---which depends on a primary cluster---severely limits its performance.  The \Name{Blink} protocol improved the design of \STW{} by removing the need for a primary cluster, as it requires each cluster to order all incoming requests~\cite{blink}. This design comes with high communication costs, unfortunately.  Finally, \Name{Menicus} tries to reduce communication costs for clients by letting clients only communicate with close-by replicas~\cite{menicus,geo-uncivil}. By alternating the primary among all replicas, this allows clients to propose requests without having to resort to contacting replicas at geographically large distances. As such, this design focusses on the costs perceived by clients, whereas \GeoBFT{} focusses on the overall costs of consensus.

\paragraph*{Sharded Blockchains} Sharding is an indispensable tool used by database systems to deal with Big Data~\cite{postgrexl,mysqlshard,oracleshard,msshard,pdbook,distbook}. 
Unsurprisingly, recent blockchain systems such as SharPer, Elastico, Monoxide, AHL, and RapidChain explore the use of sharding within the design of a replicated blockchain~\cite{caper,sharper,ahl,elastico,monoxide,rapidchain}. 
To further enable sharded designs, also high-performance communication primitives that enable communication between fault-tolerant clusters (shards) have been proposed~\cite{disc_csp}. 
Sharding does not fully solve performance issues associated with geo-scale deployments of permissioned blockchains, however. 
The main limitation of sharding is the efficient evaluation of complex operations across 
shards~\cite{bernstein,pdbook}, and sharded systems achieve high throughputs only if large portions of the 
workload access single shards. 
For example, in SharPer~\cite{sharper} and AHL~\cite{ahl}, two recent permissioned blockchain designs,
consensus on the cross-shard transactions is achieved either 
by running \PBFT{} among the replicas of the involved shards or 
by starting a two-phase commit protocol after running \PBFT{} locally within each shard, both methods with significant cross-shard costs.

Our \ResilientDB{} fabric enables geo-scale deployment of a fully replicated blockchain system that does not face such challenges, and can easily deal with any workloads. Indeed, the usage of sharding is orthogonal to the fully-replicated design we aim at with \ResilientDB{}. Still, the integration of geo-scale aware sharding with the design of \GeoBFT{}---in which local data is maintained in local clusters only---promises to be an interesting avenue for future work.

\paragraph*{Permissionless Blockchains}
Bitcoin, the first wide-spread permissionless blockchain, uses \emph{Proof-of-Work} (\PoW{}) to replicate data~\cite{encybd,bitcoin,ether}. \PoW{} requires limited communication between replicas, can support many replicas, and can operate in unstructured geo-scale peer-to-peer networks in which independent parties can join and leave at any time~\cite{bitp2p}. Unfortunately, \PoW{} incurs  a high computational complexity on all replicas, which has raised questions about the energy consumption of Bitcoin~\cite{badcoin,badbadcoin}. Additionally, the complexity of \PoW{} causes relative long transaction processing times (minutes to hours) and significantly limits the number of transactions a permissionless blockchain can handle: in 2017, it was reported that Bitcoin can only process 7 transactions per second, whereas Visa already processes 2000 transactions per second on average~\cite{hypereal}. Since the introduction of Bitcoin, several \PoW{}-inspired protocols and Bitcoin-inspired systems have been proposed~\cite{bitcoinng,ouroboros,ppcoin,byzcoin,hybridconsensus,ether}, but none of these proposals come close to providing the performance of traditional permissioned systems, which are already vastly outperformed by \GeoBFT{}.

\section{Conclusions and Future Work}

In this paper, we present our Geo-Scale Byzantine Fault-Tolerant consensus protocol (\GeoBFT{}), a novel consensus protocol with great scalability. To achieve great scalability, \GeoBFT{} relies on a topological-aware clustering of replicas in local clusters to minimize costly global communication, while providing parallelization of consensus. As such, \GeoBFT{} enables geo-scale deployments of high-performance blockchain systems. To support this vision, we implement \GeoBFT{} in our permissioned blockchain fabric---\ResilientDB{}---and show that \GeoBFT{} is not only correct, but also attains up to \emph{six times} higher throughput than existing state-of-the-art \BFT{} protocols.

\balance

\bibliographystyle{plainurl}
\bibliography{reference}

\end{document}